\documentclass[12pt]{article}

\usepackage{amsfonts,amsmath,amsthm,amsxtra}
\usepackage{wrapfig,caption,graphics,graphicx,xcolor,epstopdf, subfig}
\usepackage{tabularx, multirow, fancybox, lscape,listings,float,rotating,fancyhdr}
\usepackage{hyperref , cleveref, esvect, physics}
\usepackage{algorithm}
\usepackage{algpseudocode}
\usepackage{xfrac,nicefrac,textcomp}
\usepackage[sort&compress]{natbib}
\usepackage[us,12hr]{datetime}
\usepackage{matlab-prettifier}
\usepackage{cases}
\usepackage{enumitem}
\usepackage{booktabs}
\usepackage[affil-it]{authblk}
\usepackage{oubraces}
\numberwithin{equation}{section}
\usepackage{chngcntr}
\usepackage{soul}

\newcolumntype{C}[1]{>{\centering\arraybackslash}p{#1}}
\newcolumntype{L}[1]{>{\raggedright\arraybackslash}p{#1}}
\newcolumntype{M}[1]{>{\centering\let\newline\\\arraybackslash\hspace{0pt}}m{#1}}

\graphicspath{{Figures/}{images/}}

\newtheorem{theorem}{Theorem}[section]
\newtheorem*{remark}{Remark}

\newtheorem{corollary}[theorem]{Corollary}

\title{SCAR dynamics of adolescent substance use: peer influence, dropout, and bifurcation structure in a school-based model}
\author[a]{Tamantha Pizarro}%<tcpizarr@asu.edu>
\author[b]{Jinni Su}%jinnisu1@asu.edu
\author[c]{Yixuan He}%Yixuan.He@asu.edu>
\author[d]{Yun Kang\thanks{Corresponding author: Yun.Kang@asu.edu}}

\affil[a]{School of Complex Adaptive Systems, Arizona State University, Tempe, AZ, 85287, USA}
\affil[b]{Department of Psychology, Arizona State University, Tempe, AZ, 85287, USA}
\affil[c]{School of Mathematical and Natural Sciences
Arizona State University,Phoenix, AZ 85069, USA}
\affil[d]{College of Integrative Sciences and Arts, Arizona State University, Mesa, AZ, 85212, USA}

\begin{document}
\maketitle

\begin{abstract}
Adolescent substance use is shaped by peer influence, protective social environments, and school-related transitions such as disengagement and re-entry. To study these interacting mechanisms, we develop a four-compartment susceptible--casual--addicted--resistant (SCAR) model for adolescent substance use in a high-school setting. The population is divided into susceptible students, casual or experimental users, students with sustained, problematic, or substance-use-disorder (SUD)-level involvement, and a resistant class representing students embedded in comparatively strong anti-use or protective environments. The model incorporates peer-driven initiation, escalation from casual to problematic use, protective peer influence, substance-use-associated school disengagement, and partial return following rehabilitation or successful re-entry.

Using qualitative dynamical-systems analysis and one-parameter bifurcation diagrams, we obtain three main findings. First, the return parameter \(\phi\), representing the fraction of students in the problematic-use class who re-enter school after disengagement, separates the model into two qualitatively distinct regimes: when \(\phi=1\), the total population is conserved and biologically meaningful interior equilibria may exist, whereas when \(\phi<1\), problematic substance use induces net population loss, so positive equilibria of the scaled system need not correspond to true endemic equilibria in the original variables. Second, initiation and escalation are governed by distinct threshold mechanisms, showing that first use and progression to problematic use are dynamically different processes. Third, the model can exhibit multistability, including bistability between a substance-free equilibrium and a stable interior high-use equilibrium, so identical parameter values may produce different long-term outcomes depending on initial conditions.

These results have direct implications for school policy. They show that effective intervention must address not only initiation, but also escalation, protective social influence, and successful school re-entry after disengagement. In particular, the model supports a layered strategy combining universal prevention, early intervention for casual users, targeted support for students at risk of problematic use, recovery-supportive school environments, and strong school re-engagement pathways. Although motivated here by adolescent cannabis use, the framework also applies more broadly to alcohol use, nicotine use, vaping, and other adolescent substance-use settings in which peer influence and school-related transitions are central.
\end{abstract}

\section{Introduction}

Human behavior and social interaction play a central role in many public-health problems, including infectious-disease transmission, mental-health outcomes, and substance use \cite{bharti2021linking,holt2024social,strickland2014effects}. In the context of substance use, social influence through peers, perceived norms, and repeated interpersonal contact affects initiation, escalation, cessation, relapse, and recovery, and these mechanisms are especially important during adolescence, when peer relationships, perceived norms, and school environment strongly shape behavioral decisions \cite{strickland2014effects,whitesell2013familial}. Because adolescent substance use is driven in part by repeated interpersonal interactions, mathematical models that explicitly incorporate social influence can help identify mechanisms that promote or suppress use at the population level. This paper is a theoretical study: our goal is not to estimate effects from a specific dataset, but to use mathematical analysis and bifurcation structure to identify mechanisms that can generate qualitatively different school-level outcomes.

Adolescent substance use is also a developmental and educational concern. Repeated substance use during adolescence has been associated with impaired attention, memory, learning, coordination, academic performance, and increased risk of school disengagement and dropout \cite{national2017health,batalla2013structural,teensmaricdc}. Since adolescence is a period of ongoing cognitive, behavioral, and social development, substance exposure may affect not only immediate behavior but also longer-term educational and health trajectories \cite{national2017health,batalla2013structural}. Thus, adolescent substance use is not only a behavioral-health problem, but also a school-retention and developmental-health problem.

Among adolescent substance-use behaviors, cannabis provides a particularly important motivating application. Marijuana remains one of the most commonly used substances among teenagers in the United States \cite{teensmaricdc,johnston2023monitoring}. At the same time, changing legal and social attitudes toward adult recreational cannabis use have raised concern that adolescents may increasingly perceive marijuana as low-risk or socially acceptable \cite{legalmari,wilkinson2016formation,mennis2023recreational,nguyen2023short}. National survey data continue to document substantial cannabis use among secondary-school students, including both annual use and frequent use patterns \cite{johnston2023monitoring}. Because of its prevalence, changing social perception, and strong relevance to peer-influence processes, adolescent cannabis use provides a natural application through which to interpret a broader mathematical framework for adolescent substance use dynamics.

A high-school setting is particularly suitable for mathematical modeling because it combines repeated peer contact with structured turnover through enrollment, graduation, disengagement, and return. Students may also move through qualitatively different behavioral states, including non-use, casual or experimental use, more sustained or problematic use, and strong resistance to use. These features suggest that a school-based model should account simultaneously for peer-driven behavioral transitions and school-related turnover, especially when sustained substance use may affect school persistence and re-entry after intervention \cite{national2017health}. 

Mathematical models for substance-use dynamics have been developed for a variety of substances and populations using compartmental ordinary differential equations and related dynamical-systems frameworks \cite{abidemi2023optimal,andrawus2024unraveling,eguda2022analysis,ullah2024mathematical,ullah2024mathematical2,yusuf2014modelling}. In general, these models divide the population into behavioral classes such as susceptible, experimenting, addicted, rehabilitated, or recovered individuals, and then analyze how social contact, treatment, relapse, awareness, or intervention affect long-term prevalence. Abidemi \cite{abidemi2023optimal} studied substance involvement among students through a mathematical model with cost-effective optimal control. Andrawus et al.~\cite{andrawus2024unraveling} emphasized the importance of early awareness strategies in addiction dynamics. Eguda et al.~\cite{eguda2022analysis} focused on equilibrium structure and stability in a youth addiction model. Ullah et al.~\cite{ullah2024mathematical,ullah2024mathematical2} developed marijuana-use models incorporating sensitivity analysis and control strategies. Yusuf and Benyah \cite{yusuf2014modelling} provided an important foundation by formulating a compartmental framework for adult marijuana use and analyzing its dynamics.

Despite these advances, comparatively fewer models are tailored to adolescent or school-based populations, where peer effects are especially strong and where educational transitions such as dropout, graduation, and return directly alter the effective population under study. This distinction is not merely contextual; it changes the demographic and behavioral mechanisms that must be represented mathematically. In a high-school setting, repeated peer contact, protective social influence, substance-use-associated disengagement, and partial re-entry after intervention are all central features of the system. Here, “school disengagement” matters because the model tracks the population participating in the school contact structure; students who leave school are treated as leaving that structured interaction environment, even though broader off-campus peer interactions may still exist.
These considerations motivate the development of a mathematically tractable model specifically adapted to adolescent substance use in a school-based environment.

In this paper, we formulate and analyze a four-compartment SCAR model for adolescent substance use dynamics in a high-school setting, with application to adolescent cannabis use. The four compartments are intended to distinguish non-use with varying degrees of vulnerability or protection, early-stage use, and more severe problematic use, rather than to represent every possible behavioral subtype.
The population is divided into susceptible students \(S(t)\), casual or experimental users \(C(t)\), students with sustained, problematic, or substance-use-disorder (SUD)-level use \(A(t)\), and a resistant class \(R(t)\). The model incorporates peer-driven initiation, progression from casual use to problematic use, protective influence, substance-use-associated dropout, and partial return to the school population following rehabilitation or successful re-entry. In this way, the model combines behavioral contagion with school-population turnover while remaining sufficiently simple for rigorous qualitative analysis.

The remainder of the paper is organized as follows. In Section 2, we derive the SCAR model and explain its interpretation in the high-school setting. Section 3 contains the mathematical analysis, including positivity and boundedness, the scaled system, subsystem dynamics, threshold conditions, and equilibrium analysis. Section 4 presents the bifurcation diagrams and their interpretation. Section 5 concludes with a discussion of the main implications, limitations, and directions for future work.
\section{Model Derivation}

We formulate our model by adapting the compartmental structure proposed by Yusuf et al.~\cite{Yusuf2014}, who studied marijuana use dynamics in an adult population. Here we reinterpret that SCAR framework in a high-school setting and apply it to adolescent substance use, with adolescent cannabis use serving as a motivating application. This setting is especially relevant because adolescent substance use is shaped not only by peer influence, but also by school-specific processes such as enrollment, graduation, disengagement, and re-entry.

We divide the school population into four mutually exclusive classes. Let \(S(t)\) denote students who are currently not using substances but remain behaviorally susceptible to initiation under sufficient peer or environmental pressure. Let \(C(t)\) denote students engaged in casual, experimental, or intermittent use. Let \(A(t)\) denote students with sustained, problematic, or substance-use-disorder (SUD)-level use. Let \(R(t)\) denote students who are currently non-using and embedded in comparatively strong protective environments or anti-use norms, so that their risk of initiation is lower than that of the susceptible class. Thus, the distinction between \(S\) and \(R\) is not simply abstinent behavior versus use, but comparatively vulnerable non-use versus comparatively protected non-use.

The model assumes continuous turnover in the high-school population, with students entering and exiting at a constant per-capita rate \(\mu\). Here exit represents graduation or other school departure unrelated to natural death, which is negligible on the time scale of interest. New students enter either the susceptible class \(S\) or the resistant class \(R\): a fraction \(\rho\) enters directly into \(R\), while the remaining fraction \(1-\rho\) enters \(S\). This reflects heterogeneity in baseline attitudes toward substance use at school entry, since some students arrive with stronger anti-use norms because of family, cultural, religious, or other protective influences \cite{Hawkins1992,Hemphill2011,Hawkins1992b,SAMHSA2024}. Once in the resistant class, students remain there until leaving the school population. This is a baseline simplifying assumption rather than a claim that all entering students are substance-naive. It represents a school-entry classification into comparatively susceptible versus comparatively protected non-using states. Allowing direct entry into \(C\) or \(A\) would be a natural extension of the model in future work.

Transitions between behavioral classes are modeled using frequency-dependent terms of the form \(\beta XY/N\), which are standard in compartmental models of behavioral transmission and substance-use dynamics \cite{Binuyo2021,Ali2024}. In the present setting, these terms represent effective school-based peer influence. Some of these influences increase substance-use risk, while others are protective.

Susceptible students may initiate substance use through contact with either casual users or problematic users, represented by \(\beta_{CS}CS/N\) and \(\beta_{AS}AS/N\), respectively. These terms encode risk-promoting peer influence. By contrast, the terms \(\beta_{RS}RS/N\), \(\beta_{RC}RC/N\), and \(\beta_{RA}AR/N\) encode protective influence, meaning social processes that reduce initiation risk or promote de-escalation, such as prosocial peers, mentoring, anti-use norms, family support, or recovery-supportive school environments.

Students in the casual-use class may either return to susceptibility through protective influence from resistant students, represented by \(\beta_{RC}RC/N\), or progress to problematic use through contact with students in the problematic-use class, represented by \(\beta_{AC}AC/N\). These competing pathways describe the balance between de-escalation and escalation in adolescent substance use \cite{Schuler2019,Watts2024,Woodward2023}. Students in the problematic-use class may de-escalate into the casual-use class through protective influence from resistant students, represented by \(\beta_{RA}AR/N\), reflecting the role of recovery-supportive school and peer environments \cite{Stritzel2021,Hawkins1992,Woodward2023}.

We further assume that students in the problematic-use class may disengage from school at rate \(\delta\). A fraction \(\phi\) later returns to the school population and re-enters the susceptible class, whereas the remaining fraction \(1-\phi\) leaves the modeled school population permanently. Thus \(\delta\) measures school disengagement associated with problematic substance use, while \(\phi\) measures the extent to which that disengagement is reversible through rehabilitation or successful re-entry \cite{Hawkins1992,Stritzel2021,Do2014,SAMHSA2024}. Students returning after disengagement are assigned to \(S\) rather than \(C\), \(A\), or \(R\) as a baseline assumption representing re-entry into the school population without assuming either immediate relapse or immediate durable protection. This choice keeps the re-entry mechanism mathematically simple while allowing subsequent risk or protection to be shaped by the school environment.

Finally, students are not assumed to move directly from the casual-use or problematic-use classes into the resistant class. Instead, the model interprets \(R\) as a durable protected state that is reached gradually through pathways such as \(A \to C \to S \to R\), reflecting re-integration into stable anti-use environments rather than immediate post-recovery resistance.
Under these assumptions, the SCAR model takes the form
\begin{subequations}\label{eq:fullmod}
\begin{align}
\frac{dS}{dt} &=
    \mu(1-\rho)N
    - \frac{\beta_{CS} C S}{N}
    - \frac{\beta_{AS} A S}{N}
    - \mu S
    - \frac{\beta_{RS} R S}{N}
    + \frac{\beta_{RC} R C}{N}
    + \phi\delta A, \label{eq:S}\\[1ex]
\frac{dC}{dt} &=
    \frac{\beta_{CS} C S}{N}
    + \frac{\beta_{AS} A S}{N}
    - \mu C
    - \frac{\beta_{RC} R C}{N}
    - \frac{\beta_{AC} A C}{N}
    + \frac{\beta_{RA} A R}{N}, \label{eq:C}\\[1ex]
\frac{dA}{dt} &=
    \frac{\beta_{AC} A C}{N}
    - \mu A
    - \delta A
    - \frac{\beta_{RA} A R}{N}, \label{eq:A}\\[1ex]
\frac{dR}{dt} &=
    \mu\rho N
    + \frac{\beta_{RS} R S}{N}
    - \mu R. \label{eq:R}
\end{align}
\end{subequations}

Here
\[
N(t)=S(t)+C(t)+A(t)+R(t),
\]
since the four compartments are mutually exclusive and collectively exhaustive. Summing \eqref{eq:S}--\eqref{eq:R} yields
\[
\frac{dN}{dt}=-(1-\phi)\delta A(t).
\]
Therefore, the total high-school population is constant if and only if \(\phi=1\), and it decreases whenever \(A(t)>0\) if \(\phi<1\).

From a mechanistic standpoint, system~\eqref{eq:fullmod} has a triangular structure: the proportion dynamics evolve independently of \(N\), while \(N\) evolves according to a scalar equation driven by the problematic-use class. This separation is central to the distinct qualitative behaviors of the model in the cases \(\phi=1\) and \(\phi<1\), which will be analyzed in the next section.

Table~\ref{tab:parameters} summarizes all variables and parameters, and Figure~\ref{diagram} illustrates the overall flow structure of the SCAR model.

\begin{table}[ht]
    \centering
    \caption{Description of variables and parameters for the SCAR model~\eqref{eq:fullmod}.}
    \begin{tabular}{M{4cm}|L{10cm}|c}
        \toprule
        \textbf{Symbol} & \textbf{Description} & \textbf{Units} \\
        \midrule
        $S$ & Susceptible (non-using) students & Individuals \\[1ex]
        $C$ & Casual, experimental, or intermittent substance users & Individuals \\[1ex]
        $A$ & Students with sustained, problematic, or SUD-level substance use & Individuals \\[1ex]
        $R$ & Resistant/protected non-using students with strong anti-use norms & Individuals \\[1ex]
        $N$ & Total school population, $N=S+C+A+R$ & Individuals \\[1ex]
        $\mu$ & Per-capita rate of entry and exit from the school (turnover rate) & $\text{time}^{-1}$ \\[1ex]
        $\rho$ & Fraction of incoming students with strong no-use norms (entering directly into $R$) & -- \\[1ex]
        $\beta_{CS}$ & Peer-influence rate: contact between $C$ and $S$ leading $S \to C$ & $\text{time}^{-1}$ \\[1ex]
        $\beta_{AS}$ & Peer-influence rate: contact between $A$ and $S$ leading $S \to C$ & $\text{time}^{-1}$ \\[1ex]
        $\beta_{RS}$ & Protective influence rate: contact between $R$ and $S$ leading $S \to R$ & $\text{time}^{-1}$ \\[1ex]
        $\beta_{RC}$ & Protective influence rate: contact between $R$ and $C$ leading $C \to S$ & $\text{time}^{-1}$ \\[1ex]
        $\beta_{AC}$ & Escalation rate: contact between $A$ and $C$ leading $C \to A$ & $\text{time}^{-1}$ \\[1ex]
        $\beta_{RA}$ & Protective de-escalation rate: contact between $R$ and $A$ leading $A \to C$ & $\text{time}^{-1}$ \\[1ex]
        $\delta$ & School disengagement/dropout rate associated with problematic substance use & $\text{time}^{-1}$ \\[1ex]
        $\phi$ & Fraction of disengaged students who later re-enter school & -- \\[1ex]
        \bottomrule
    \end{tabular}
    \label{tab:parameters}
\end{table}

\begin{figure}[ht]
    \centering
    \includegraphics[width=0.85\textwidth]{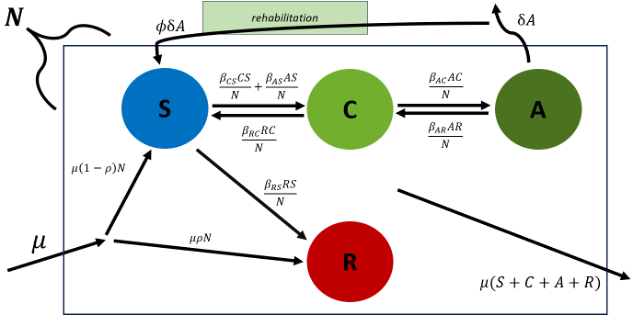}
    \caption{Flow diagram illustrating all transitions in the SCAR model~\eqref{eq:fullmod}, including peer-influence routes, protective influences, de-escalation pathways, and dropout/re-entry dynamics.}
    \label{diagram}
\end{figure}
\section{Mathematical Analysis}
In this section we establish that Model~\eqref{eq:fullmod} is mathematically well posed and then analyze the reduced proportion dynamics. We first prove positivity and boundedness of solutions. We then derive the scaled system, identify the distinct dynamical roles of the return parameter \(\phi\), analyze the substance-free and SUD-free subsystems, derive invasion thresholds for casual and problematic use, and characterize the interior equilibrium structure of the full reduced system when \(\phi=1\).

\begin{theorem}[Positive invariance and boundedness]\label{posbound}
Let \((S(0),C(0),A(0),R(0))\in \mathbb{R}_{+}^{4}\). Then the solution
\((S(t),C(t),A(t),R(t))\) of system~\eqref{eq:fullmod} exists for all \(t\ge 0\),
remains in \(\mathbb{R}_{+}^{4}\), and satisfies
\[
0 \le S(t),C(t),A(t),R(t) \le N(t)
\quad\text{for all } t\ge 0,
\]
where \(N(t)=S(t)+C(t)+A(t)+R(t)\). Moreover,
\[
\frac{dN}{dt} = -(1-\phi)\delta A(t)\le 0,
\]
so \(N(t)\) is nonincreasing. In particular,
\[
0 < N(t) \le N(0)\qquad \forall t\ge 0,
\]
and therefore all solutions remain bounded in the compact region
\[
\Omega = \{(S,C,A,R)\in \mathbb{R}_{+}^{4} : 0\le S+C+A+R\le N(0)\}.
\]
Thus \(\Omega\) is a positively invariant and bounded region for Model~\eqref{eq:fullmod}.
\end{theorem}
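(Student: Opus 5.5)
The plan is a standard well-posedness argument in three stages: local existence, positivity via quasi-positivity of the vector field, and boundedness and global continuation via the equation for \(N\).

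\textbf{Local existence.} On the open set \(\{S+C+A+R>0\}\) the right-hand side of \eqref{eq:fullmod} is locally Lipschitz, since each term is a polynomial or a quadratic form divided by \(N\). The Picard--Lindel\"of theorem then gives a unique maximal solution on some \([0,T_{\max})\) whenever \(N(0)>0\). The statement implicitly requires \(N(0)>0\) so that the \(1/N\) terms make sense. If \(N(0)=0\) were allowed one would take the trivial solution, but I would simply assume \(N(0)>0\).

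\textbf{Positivity.} I would show that the vector field points into \(\mathbb{R}^4_+\) on each face. That is, each component's derivative is nonnegative whenever that component vanishes and the others are nonnegative:
\begin{itemize}
\item at \(S=0\), \(\dot S=\mu(1-\rho)N+\beta_{RC}RC/N+\phi\delta A\ge 0\);
\item at \(C=0\), \(\dot C=\beta_{AS}AS/N+\beta_{RA}AR/N\ge0\);
\item at \(R=0\), \(\dot R=\mu\rho N\ge0\).
\end{itemize}
For \(A\) I would argue more sharply. Writing \(\dot A = A\,g(t)\) with \(g=\beta_{AC}C/N-\mu-\delta-\beta_{RA}R/N\) continuous along the solution gives
\[
A(t)=A(0)\exp\Bigl(\int_0^t g(s)\,ds\Bigr)\ge 0 .
\]
With this quasi-positivity in hand, the standard invariance lemma for ODEs with quasi-positive right-hand side gives invariance of \(\mathbb{R}^4_+\). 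I would state that lemma and cite it. Alternatively, one can argue directly: let \(t_1\) be the first time some component becomes negative and derive a contradiction. The tangency cases where the derivative is \(0\) need care there, and the cleanest fix is to perturb the vector field by \(+\varepsilon\) in each component and let \(\varepsilon\to0\).

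\textbf{Dynamics of \(N\) and the bounds.} Summing the equations is already done in the paper and gives \(\dot N=-(1-\phi)\delta A\le 0\), because \(A\ge0\) and \(0\le\phi\le1\). Hence \(N(t)\le N(0)\). The componentwise bounds \(0\le X\le N\) follow at once from nonnegativity of all four components.

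\textbf{Main obstacle: \(N(t)>0\).} The main obstacle is ruling out \(N\) reaching \(0\) in finite time, since that is where the vector field becomes singular. Here I would use \(\dot N\ge -(1-\phi)\delta A\ge -\delta N\), valid because \(A\le N\). Gr\"onwall's inequality then yields
\[
N(t)\ge N(0)e^{-\delta t}>0
\]
on \([0,T_{\max})\).

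\textbf{Global existence and conclusion.} The solution stays in the compact set \(\Omega\), and on each finite interval \([0,T]\) it also stays in \(\{N\ge N(0)e^{-\delta T}\}\), where the right-hand side is bounded and Lipschitz. Consequently the solution cannot blow up or approach the singular set in finite time. The extension theorem then gives \(T_{\max}=\infty\). Together these steps establish that \(\Omega\) is positively invariant and bounded.
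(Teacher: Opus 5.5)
Your proposal is correct and follows essentially the same route as the paper's proof: the integrating-factor formula for \(A\), sign checks of \(S'\), \(C'\), \(R'\) on the faces where each vanishes, summation to get \(N'=-(1-\phi)\delta A\le 0\), and componentwise bounds from nonnegativity. Your Gr\"onwall bound \(N(t)\ge N(0)e^{-\delta t}\), obtained from \(A\le N\), is a sounder justification of \(N>0\) than the paper's remark that \(N'\) is bounded below (which alone does not stop \(N\) from reaching zero), and your continuation argument and treatment of tangency cases supply details the paper leaves implicit.
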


\noindent\textbf{Interpretation.}
Theorem~\ref{posbound} guarantees that the SCAR model remains biologically meaningful for all forward time: the sizes of the susceptible, casual-use, problematic-use, and resistant classes remain nonnegative, and the total modeled school population remains finite.

%\noindent\textbf{The scaled SCAR system}\\

To obtain a reduced system describing the evolution of proportions, we introduce
\[
x=\frac{S}{N},\qquad y=\frac{C}{N},\qquad z=\frac{A}{N},\qquad w=\frac{R}{N},
\qquad x+y+z+w=1.
\]
Because \(N(t)>0\) for all \(t\ge0\), this change of variables is well defined. Differentiating \(x=S/N\) and similarly for \(y,z,w\), and using
\[
N'=-(1-\phi)\delta A=-(1-\phi)\delta Nz,
\]
we obtain the proportion system
\begin{subequations}\label{eq:scaled}
\begin{align}
x' &= \mu(1-\rho)
  -x\mu - x(1-\phi)\delta z
  -\beta_{CS}xy -\beta_{AS}xz
  -\beta_{RS}xw + \beta_{RC}wy + \phi\delta z,\\[1ex]
y' &= \beta_{CS}xy + \beta_{AS}xz
  -y\mu -y(1-\phi)\delta z
  -\beta_{RC}wy -\beta_{AC}zy +\beta_{RA}zw,\\[1ex]
z' &= \beta_{AC}zy - z\mu -z\delta -z(1-\phi)\delta z -\beta_{RA}zw,\\[1ex]
w' &= \mu\rho -w\mu -w(1-\phi)\delta z +\beta_{RS}xw.
\end{align}
\end{subequations}

Since \(x=1-y-z-w\), the scaled system reduces to the \(yzw\)-system
\begin{subequations}\label{eq:scaled_reduced}
\begin{align}
y' &=
y\left[\beta_{CS}(1-y-z-w)-\mu-(1-\phi)\delta z-\beta_{RC}w-\beta_{AC}z-\beta_{AS}z\right]
+z\left[\beta_{AS}(1-z-w)+\beta_{RA}w\right],\\
z' &= z\left[\beta_{AC}y - \mu -\delta -(1-\phi)\delta z -\beta_{RA}w\right],\\[1ex]
w' &= \mu\rho -w\left[\mu +(1-\phi)\delta z -\beta_{RS}(1-y-z-w)\right].
\end{align}
\end{subequations}

\begin{remark}
The scaled formulation separates changes in behavioral composition from changes in total school population. This distinction is essential because, when \(\phi<1\), the scaled system may admit positive steady states in proportions that do not correspond to genuine endemic equilibria in the original variables.
\end{remark}

%\nonindent\textbf{The structural role of the return parameter \texorpdfstring{$\phi$}{phi}}

The parameter \(\phi\) determines whether school disengagement associated with problematic use is fully reversible at the population level. Since
\[
\frac{dN}{dt}=-(1-\phi)\delta A(t),
\]
the sign of \(1-\phi\) separates the system into two structurally different regimes.

\noindent\textbf{Case \(\phi=1\): conserved-population regime}

When \(\phi=1\),
\[
\frac{dN}{dt}=0,
\]
so \(N(t)\equiv N(0)\). In this regime, the SCAR model behaves like a classical compartmental system with conserved total population. The reduced \(yzw\)-system becomes
\begin{subequations}\label{eq:scaledphi10}
\begin{align}
y' &=
y\left[\beta_{CS}(1-y-z-w)-\mu-\beta_{RC}w-(\beta_{AC}+\beta_{AS})z\right]
+z\left[\beta_{AS}(1-z-w)+\beta_{RA}w\right],\\
z' &= z\left[\beta_{AC}y-\mu-\delta-\beta_{RA}w\right],\\
w' &= \mu\rho-w\left[\mu-\beta_{RS}(1-y-z-w)\right].
\end{align}
\end{subequations}
In particular, biologically meaningful interior equilibria may exist in the original variables.

\noindent\textbf{Case \(\phi<1\): population-loss regime}

When \(0\le \phi<1\),\qquad
\(\frac{dN}{dt}=-(1-\phi)\delta A(t)<0
\qquad\text{whenever } A(t)>0.\)
 Thus problematic substance use contributes directly to loss from the modeled school population.

This has an immediate consequence: the original SCAR system cannot admit a biologically meaningful equilibrium with \(A^*>0\). Indeed, if \((S^*,C^*,A^*,R^*)\) were an equilibrium with \(A^*>0\), then
\[
N'=-(1-\phi)\delta A^*<0,
\]
contradicting the equilibrium condition \(N'=0\). Therefore, for \(\phi<1\), any positive equilibrium in the scaled system should be interpreted only as a \emph{shape equilibrium} in proportions, not as a true endemic equilibrium in the original variables.

\begin{remark}
The distinction between \(\phi=1\) and \(\phi<1\) is one of the main structural features of the model. When \(\phi=1\), the system may support true long-term coexistence of all classes. When \(\phi<1\), problematic use induces net school-population loss, so only SUD-free equilibria remain biologically meaningful in the original variables.
\end{remark}
\subsection{Dynamics of the substance-free and SUD-free subsystems}

We begin with the scaled substance-free subsystem, corresponding to the case \(C=A=0\), so that only the susceptible and resistant classes remain. 
\begin{equation}\label{eq:scaledphi1yz0}
w'=\mu\rho-w\left[\mu-\beta_{RS}(1-w)\right].
\end{equation}

\begin{theorem}[Global dynamics of the scaled substance-free subsystem]\label{thm:drugfree_wf}
Consider \eqref{eq:scaledphi1yz0} with \(\mu>0\), \(\rho\in[0,1]\), and \(\beta_{RS}>0\). Then \eqref{eq:scaledphi1yz0} has a unique equilibrium
\begin{equation}\label{eq:wf_explicit}
w_f=\frac{(\beta_{RS}-\mu)+\sqrt{(\beta_{RS}-\mu)^2+4\rho\mu\beta_{RS}}}{2\beta_{RS}},
\end{equation}
and \(w_f\) is globally asymptotically stable in \([0,1]\).
\end{theorem}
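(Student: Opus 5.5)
The right-hand side of \eqref{eq:scaledphi1yz0} is a quadratic polynomial in $w$, so the plan is to treat it entirely by one-dimensional phase-line analysis. Write
\[
f(w)=\mu\rho-w\left[\mu-\beta_{RS}(1-w)\right]=-\beta_{RS}w^{2}+(\beta_{RS}-\mu)w+\mu\rho .
\]
Equilibria are exactly the roots of $f(w)=0$, equivalently of $\beta_{RS}w^{2}-(\beta_{RS}-\mu)w-\mu\rho=0$. The quadratic formula gives the two roots
\[
w_{\pm}=\frac{(\beta_{RS}-\mu)\pm\sqrt{(\beta_{RS}-\mu)^2+4\rho\mu\beta_{RS}}}{2\beta_{RS}}.
\]
The discriminant is nonnegative because $\rho\ge0$. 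The product of the roots is $-\mu\rho/\beta_{RS}\le 0$, so $w_-\le 0\le w_+$.

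Next I will show that $w_f=w_+$ lies in $[0,1]$ and is the only equilibrium there. For this I evaluate $f$ at the endpoints:
\[
f(0)=\mu\rho\ge 0,\qquad f(1)=\mu\rho-\mu=-\mu(1-\rho)\le 0 .
\]
Since $f$ is a downward-opening parabola with $f(1)\le 0$ and $1\ge w_-$, it follows that $1\ge w_+$. Hence $w_f\in[0,1]$.

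The boundary cases need separate attention. If $\rho>0$, then $w_-<0<w_+$, so $w_f$ is the unique equilibrium in $[0,1]$. If $\rho=0$, the roots are $0$ and $(\beta_{RS}-\mu)/\beta_{RS}$, and uniqueness together with global stability on all of $[0,1]$ can fail. When $\beta_{RS}>\mu$, the point $w=0$ is an unstable equilibrium. When $\beta_{RS}\le\mu$, the formula gives $w_f=0$. I therefore plan to prove the statement for $\rho\in(0,1]$, and to note that for $\rho=0$ the conclusion holds on $(0,1]$ when $\beta_{RS}>\mu$ and on all of $[0,1]$ when $\beta_{RS}\le\mu$. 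This is the main obstacle: making the endpoint cases honest, since the statement as written allows $\rho=0$.

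Invariance of $[0,1]$ follows from the same endpoint values: $f(0)\ge0$ and $f(1)\le0$. For global stability I use the factorization
\[
f(w)=-\beta_{RS}(w-w_-)(w-w_+).
\]
For $w\in[0,1]$ with $w\ne w_+$, and $w>w_-$ (automatic when $\rho>0$), the sign of $f(w)$ is opposite to the sign of $w-w_+$. Consequently solutions are monotone and bounded, and they converge to an equilibrium, which must be $w_+$.

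Local asymptotic stability follows from
\[
f'(w_+)=-\beta_{RS}(w_+-w_-)<0
\]
when $\rho>0$. Alternatively, $V(w)=(w-w_+)^2$ serves as a Lyapunov function, with $\dot V=2(w-w_+)f(w)<0$ away from $w_+$. Either route gives global asymptotic stability of $w_f$ in $[0,1]$.
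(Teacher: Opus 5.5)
Your proposal is correct and follows essentially the same one-dimensional phase-line argument as the paper. Both use the explicit roots of the quadratic, the endpoint signs $f(0)=\mu\rho\ge 0\ge f(1)$ for invariance of $[0,1]$, the sign of $f$ on either side of $w_f$, and monotone convergence; you get the sign pattern from the factorization $-\beta_{RS}(w-w_-)(w-w_+)$ where the paper uses strict concavity.

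Your separate treatment of $\rho=0$ is more careful than the paper's proof. The paper asserts that $w_-\le 0$ leaves at most one nonnegative equilibrium, which overlooks that for $\rho=0$ and $\beta_{RS}>\mu$ the point $w=0$ is a second, unstable equilibrium in $[0,1]$. So the uniqueness and global-stability claims as stated genuinely require $\rho>0$, or the restrictions you give.
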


\noindent\textbf{Interpretation and policy implications.}
Theorem~\ref{thm:drugfree_wf} shows that, in the absence of the casual-use and problematic-use classes, the school converges to a unique substance-free equilibrium determined by the balance among turnover \(\mu\), direct entry into the resistant class \(\rho\), and protective influence \(\beta_{RS}\). The equilibrium value \(w_f\) therefore measures the baseline level of resistant or protected students sustained by the school environment when active substance use is absent. Larger values of \(\rho\) increase the proportion of students entering school with strong no-use norms, while larger values of \(\beta_{RS}\) strengthen movement from susceptibility into resistance through protective peer influence. Conversely, higher turnover \(\mu\) tends to weaken the ability of the school to maintain a strongly protected composition. Biologically, this is consistent with evidence that anti-use norms, prosocial peer environments, and family protection reduce adolescent initiation risk \cite{Hawkins1992,Hemphill2011,Hawkins1992b,SAMHSA2024,Berkowitz2004}. From a policy perspective, this theorem supports universal prevention strategies that increase \(\rho\) and \(\beta_{RS}\), such as orientation-based prevention, school-wide norm setting, peer leadership, and family-school protective coordination.\\

Observe that the set \(z=0\), corresponding to the absence of the problematic-use or SUD-level class \(A\), is forward invariant in the scaled system. The resulting SUD-free subsystem is
\begin{subequations}\label{eq:scaledphi1z0}
\begin{align}
y' &=y\left[\beta_{CS}(1-y-w)-\mu-\beta_{RC}w\right],\\
w' &= \mu\rho-w\left[\mu-\beta_{RS}(1-y-w)\right]
\end{align}
\end{subequations}
forms a lower-dimensional invariant manifold of the full system.

\begin{theorem}[Dynamics of the SUD-free subsystem]\label{thm:yw}
Assume \(\beta_{CS},\beta_{RS},\beta_{RC},\mu>0\) and \(\rho\in(0,1)\). The subsystem \eqref{eq:scaledphi1z0} is positively invariant in
\[
\Omega=\{(y,w)\in\mathbb{R}_+^2:\ y+w\le 1\}.
\]
Moreover:

\noindent\textbf{(i) Boundary equilibrium.}
The subsystem admits a unique boundary equilibrium
\[
E_f=(0,w_f),
\]
which is locally asymptotically stable if and only if
\begin{equation}\label{eq:wthr}
w_f=\frac{(\beta_{RS}-\mu)+\sqrt{(\beta_{RS}-\mu)^2+4\rho\mu\beta_{RS}}}{2\beta_{RS}}>\frac{\beta_{CS}-\mu}{\beta_{CS}+\beta_{RC}}=:w_{\mathrm{thr}}.
\end{equation}

\noindent\textbf{(ii) Interior equilibria.}
Let
\(\Delta=\mu^2(\beta_{CS}-\beta_{RS})^2-4\mu\rho\,\beta_{RS}\beta_{RC}\beta_{CS}.\)

If \(\Delta\ge 0\), define
\(
w_c^\pm=\frac{\mu(\beta_{CS}-\beta_{RS})\pm\sqrt{\Delta}}{2\beta_{RS}\beta_{RC}},
\qquad
y_c^\pm=\frac{\beta_{CS}-\mu-(\beta_{CS}+\beta_{RC})w_c^\pm}{\beta_{CS}}.
\)
Then:
\begin{enumerate}
    \item[(a)] If \(0<w_c^-<\frac{\beta_{CS}-\mu}{\beta_{CS}+\beta_{RC}}=:w_{\mathrm{thr}},\) then \(E_c^-=(y_c^-,w_c^-)\) exists 
    in which case \(E_c^-\) is locally asymptotically stable.

    \item[(b)] If \(0<w_c^+<\frac{\beta_{CS}-\mu}{\beta_{CS}+\beta_{RC}}=:w_{\mathrm{thr}},\) then \(E_c^+=(y_c^+,w_c^+)\) exists 
    in which case \(E_c^+\) is a saddle equilibrium.
\end{enumerate}

\noindent\textbf{(iii) Absence of periodic orbits.}
The subsystem admits no periodic orbits in \(\Omega^\circ\) which is the interior of \(\Omega\).

\noindent\textbf{(iv) Global phase portrait.}
\begin{enumerate}
    \item[(a)] If \(0<w_c^-\leq w_c^+<\frac{\beta_{CS}-\mu}{\beta_{CS}+\beta_{RC}}=:w_{\mathrm{thr}},\) then
    both \(E_c^-\) and \(E_c^+\) exist, and the subsystem exhibits bistability between \(E_c^-\) and \(E_f\).
    \item[(b)] If \(0<w_c^-<w_{\mathrm{thr}}<w_c^+\), then \(E_c^-\) is globally asymptotically stable in \(\Omega^\circ\).
    \item[(c)] If \(w_{\mathrm{thr}}<w_c^-\), then no interior equilibrium exists and \(E_f\) is globally asymptotically stable.
\end{enumerate}
\end{theorem}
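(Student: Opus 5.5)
We prove the four parts in order, working directly with the planar system \eqref{eq:scaledphi1z0}.

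\emph{Positive invariance.} We check the vector field on the three edges of the triangle $\Omega$.
\begin{itemize}
\item On $y=0$ we have $y'=0$, so this edge is invariant.
\item On $w=0$ we have $w'=\mu\rho>0$.
\item On $y+w=1$,
\[
(y+w)'=y[-\mu-\beta_{RC}w]+\mu\rho-w\mu=\mu\rho-\mu-\beta_{RC}wy\le\mu(\rho-1)<0 .
\]
\end{itemize}
Hence the flow points into $\Omega$ everywhere on its boundary. On the invariant edge $y=0$, the $w$-equation reduces to \eqref{eq:scaledphi1yz0}, so Theorem~\ref{thm:drugfree_wf} gives $E_f=(0,w_f)$ as the unique equilibrium there, attracting along that edge.

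\emph{Local stability of $E_f$.} The Jacobian at $E_f$ is upper triangular in the ordering $(y,w)$. Its eigenvalues are
\[
\lambda_1=\beta_{CS}(1-w_f)-\mu-\beta_{RC}w_f,\qquad \lambda_2=\frac{d}{dw}\big(\mu\rho-w[\mu-\beta_{RS}(1-w)]\big)\Big|_{w_f}<0 .
\]
The sign of $\lambda_2$ follows from Theorem~\ref{thm:drugfree_wf}: $w_f$ is the larger root of a downward-opening quadratic, so the derivative there is negative. The condition $\lambda_1<0$ is equivalent to $w_f>w_{\rm thr}$, which proves (i).

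\emph{Interior equilibria.} Setting the bracket in $y'$ to zero gives the line
\[
y=\frac{\beta_{CS}-\mu-(\beta_{CS}+\beta_{RC})w}{\beta_{CS}}.
\]
This yields $1-y-w=\dfrac{\mu+\beta_{RC}w}{\beta_{CS}}$. Substituting into $w'=0$ gives
\[
\mu\rho-\mu w+\frac{\beta_{RS}w(\mu+\beta_{RC}w)}{\beta_{CS}}=0,
\]
that is,
\[
\beta_{RS}\beta_{RC}w^2-\mu(\beta_{CS}-\beta_{RS})w+\mu\rho\beta_{CS}=0 .
\]
Its roots are exactly $w_c^\pm$ with discriminant $\Delta$. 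Positivity of $y_c^\pm$ is equivalent to $w_c^\pm<w_{\rm thr}$, which gives the existence statements in (ii). Note also that the product of the roots is positive, so both roots share a sign.

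For stability, write $f=yg(y,w)$ with $g$ the bracket. At an interior equilibrium the Jacobian is
\[
J=\begin{pmatrix} -\beta_{CS}y & -(\beta_{CS}+\beta_{RC})y\\ -\beta_{RS}w & -\mu+\beta_{RS}(1-y-w)-\beta_{RS}w\end{pmatrix}.
\]
We do not compute $\det J$ directly. Instead we observe that, up to a positive factor $y$, $\det J$ has the sign of the derivative of the quadratic $h(w)$ above, taken along the nullcline $g=0$. Concretely, parametrise the nullcline by $w$ and differentiate $F(w)=w'|_{g=0}$, which gives
\[
\det J=\beta_{CS}\,y\,(-F'(w))\cdot(\text{positive}).
\]
Since $F=-h/\beta_{CS}$, we get $\operatorname{sign}\det J=\operatorname{sign}h'(w)$. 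Now $h'(w_c^-)<0<h'(w_c^+)$. This sign relation is the step we must verify carefully.

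\begin{itemize}
\item At $w_c^+$ we have $\det J<0$, so $E_c^+$ is a saddle.
\item At $w_c^-$ we have $\det J>0$. The trace is $-\beta_{CS}y+J_{22}$, and $J_{22}<0$ because $J_{22}=-\mu\rho/w-\beta_{RS}w$, using $w'=0$. So the trace is negative and $E_c^-$ is locally asymptotically stable.
\end{itemize}

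\emph{No periodic orbits (iii).} We apply Dulac's criterion with $B=1/(yw)$:
\[
\partial_y\!\left(\frac{g}{w}\right)+\partial_w\!\left(\frac{\mu\rho}{yw}-\frac{\mu-\beta_{RS}(1-y-w)}{y}\right)=-\frac{\beta_{CS}}{w}-\frac{\mu\rho}{yw^2}-\frac{\beta_{RS}}{y}<0
\]
on $\Omega^\circ$. This rules out periodic orbits in the interior.

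\emph{Global phase portrait (iv).} By Poincar\'e--Bendixson, every $\omega$-limit set in the compact invariant set $\Omega$ consists of equilibria, together with connecting orbits, which no cycle can form.
\begin{itemize}
\item In case (c) the only equilibrium is $E_f$, which is asymptotically stable, so $E_f$ is globally asymptotically stable. Here we use that both roots share a sign, so both exceed $w_{\rm thr}$.
\item In case (b), $E_f$ is a saddle whose stable manifold is the edge $y=0$. Interior orbits therefore cannot converge to $E_f$, so they converge to $E_c^-$.
\item In case (a), $E_f$ and $E_c^-$ are both attractors and $E_c^+$ is a saddle. The stable manifold of $E_c^+$ separates the two basins, giving bistability. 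The degenerate case $w_c^-=w_c^+$ is a saddle-node and needs a short separate remark.
\end{itemize}

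The main obstacle is the determinant sign identity linking $\det J$ to $h'(w)$. We will verify it by explicit elimination rather than rely on the nullcline-slope heuristic.
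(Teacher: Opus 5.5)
\textbf{Gap in part (iv): the stability type of $E_f$ is asserted, not derived.} In every case of (iv) you state the type of $E_f$: an attractor in (a) and (c), a saddle in (b). But the hypotheses of (iv) compare $w_c^\pm$ with $w_{\mathrm{thr}}$, whereas by (i) the type of $E_f$ is decided by comparing $w_f$ with $w_{\mathrm{thr}}$. Your proposal never links these two comparisons, and the conclusions depend on the link. In (b), $E_c^-$ cannot be globally attracting if $E_f$ is stable. In (c), global attractivity from Poincar\'e--Bendixson does not by itself give asymptotic stability.

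The missing step is the one the paper supplies, and it is already available in your notation. The line $g=0$ meets the edge $y=0$ exactly at $w=w_{\mathrm{thr}}$, so $F(w_{\mathrm{thr}})=G(w_{\mathrm{thr}})$. Here $G(w)=\mu\rho-w[\mu-\beta_{RS}(1-w)]$ is the field on that edge, a downward parabola with roots $w_-<0<w_f$. Since $\beta_{CS}F=h=\beta_{RS}\beta_{RC}(w-w_c^-)(w-w_c^+)$, the sign of $G(w_{\mathrm{thr}})$ is fixed by where $w_{\mathrm{thr}}$ sits relative to $w_c^\pm$:
\begin{itemize}
\item In cases (a) and (c), $h(w_{\mathrm{thr}})>0$, so $G(w_{\mathrm{thr}})>0$. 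Hence $w_{\mathrm{thr}}<w_f$, and $E_f$ is a stable node.
\item In case (b), $h(w_{\mathrm{thr}})<0$ and $w_{\mathrm{thr}}>w_c^->0>w_-$. Hence $w_f<w_{\mathrm{thr}}$, and $E_f$ is a saddle whose stable manifold is the edge $y=0$.
\end{itemize}

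\textbf{Sign error in the determinant step.} In fact $F=+h/\beta_{CS}$, not $-h/\beta_{CS}$. Your nullcline identity $\det J=\beta_{CS}\,y\,(-F'(w))$ is correct, so $\det J=-y\,h'(w)$. You can check this directly: $\det J=\frac{y}{w}(\mu\rho\beta_{CS}-\beta_{RS}\beta_{RC}w^2)$, and at a root of $h$ one has $w\,h'(w)=\beta_{RS}\beta_{RC}w^2-\mu\rho\beta_{CS}$. Thus $\operatorname{sign}\det J=-\operatorname{sign}h'(w)$. Combined with $h'(w_c^-)<0<h'(w_c^+)$, this gives exactly the conclusions you state: $\det J>0$ at $E_c^-$ and $\det J<0$ at $E_c^+$. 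As written, your stated sign relation contradicts those conclusions.

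With the correction, your route differs mildly from the paper's and is cleaner. The paper computes $\det J=\frac{\beta_{CS}\mu y}{w}\bigl(2\rho-(1-\beta_{RS}/\beta_{CS})w\bigr)$ and leaves the sign check to ``one checks''. That bracket equals $-\frac{w}{\mu\beta_{CS}}h'(w)$, so your argument makes the check transparent.

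\textbf{Other points.} Your caution about $\Delta=0$ is justified. There $\det J=0$ and the merged equilibrium is a saddle-node, not an attractor. So (ii)(a) and the bistability in (iv)(a) require $\Delta>0$, and the paper's proof also tacitly assumes $w_c^-<w_c^+$. Your invariance check for $\Omega$ and your Dulac computation are correct; the paper omits the former.
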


\noindent\textbf{Interpretation and policy implications.}
Theorem~\ref{thm:yw} shows that even when the problematic-use class is absent, the school may converge to the substance-free equilibrium, to a stable casual-use equilibrium, or to a bistable regime in which both outcomes are possible. The key balance is between peer-driven initiation into casual use, governed primarily by \(\beta_{CS}\), and protective return from casual use to non-use, governed by \(\beta_{RC}\), together with the background level of resistance determined by \(w_f\). The condition
\[
w_f>\frac{\beta_{CS}-\mu}{\beta_{CS}+\beta_{RC}}=:w_{\mathrm{thr}}
\]
means that the school’s protective structure is strong enough to suppress sustained casual use. In contrast, if \(w_f<w_{\mathrm{thr}}\), the school becomes vulnerable to long-term casual-use dynamics. Biologically, this supports the idea that early-use clusters, weakening of protective peer culture, or delayed intervention may move a school across a threshold into the basin of attraction of the stable casual-use equilibrium. From a policy perspective, this theorem supports early intervention before casual use becomes normalized. In model terms, schools should seek to reduce \(\beta_{CS}\) and increase \(\beta_{RC}\) and \(\beta_{RS}\), for example through peer-norm correction, mentoring, rapid response to early-use clusters, and strong prosocial peer structures.\\

\noindent\textbf{An illustrative example of bistability in the \((y,w)\) subsystem.}
Consider
\[
\beta_{CS}=4.7237,\quad
\beta_{RS}=2.0329,\quad
\beta_{RC}=2.1955,\quad
\mu=0.8054,\quad
\rho=0.02165.
\]
The boundary equilibrium is
\[
E_f=(0,w_f)\approx(0,0.6177),
\]
with eigenvalues
\[
\lambda\approx -1.2839,\quad -0.3557,
\]
so \(E_f\) is locally asymptotically stable.

The interior equilibrium condition reduces to
\[
\beta_{RS}\beta_{RC}w^2-\mu(\beta_{CS}-\beta_{RS})w+\mu\rho\,\beta_{CS}=0,
\]
which has two positive roots
\[
w_c^- \approx 0.04156,\qquad
w_c^+ \approx 0.4440.
\]
The corresponding \(y\)-values are
\[
y_c^- \approx 0.7686,\qquad
y_c^+ \approx 0.1791.
\]
Thus the two interior equilibria are
\[
E_c^- \approx (0.7686,\,0.04156),\qquad
E_c^+ \approx (0.1791,\,0.4440).
\] where \(E_c^-\)is locally asymptotically stable and \(E_c^+\) is a saddle. Therefore, the subsystem is bistable, with \(E_f\) and \(E_c^-\) as two competing attractors separated by the saddle \(E_c^+\).

\medskip
\noindent\textbf{Implications.}
This example shows that two schools with the same parameter values may still converge to different long-term outcomes, depending on their initial behavioral composition. In particular, the balance among initiation \((\beta_{CS})\), protective return \((\beta_{RC})\), resistance formation \((\beta_{RS})\), turnover \((\mu)\), and protected entry \((\rho)\) can generate two competing attractors. Biologically, this means that early-use clusters, weakening of protective peer culture, or delayed intervention may shift the school from a no-use state to a substance-use state. From a policy perspective, this supports rapid response, early screening, and prevention during key transition periods such as school entry, since later intervention may be more difficult once the system moves into the basin of attraction of the substance-use equilibrium.
\subsection{Local stability and invasion thresholds in the full reduced system}

We now return to the full reduced \(yzw\)-system \eqref{eq:scaled_reduced}. The subsystem analysis suggests two distinct invasion thresholds: one for casual use invading the substance-free equilibrium, and one for problematic use invading the SUD-free equilibrium.

\begin{theorem}[Local stability and invasion thresholds in the reduced \(yzw\) system]\label{thm:yzw_stability}
Define the \emph{substance-use invasion threshold at the substance-free equilibrium \(E_f=(0,0,w_f)\)} by
\begin{equation}\label{eq:Ry}
\mathcal{R}_y:=\frac{\beta_{CS}(1-w_f)}{\mu+\beta_{RC}w_f},
\end{equation}
and define the \emph{problematic-use invasion threshold at the SUD-free equilibrium \(E_c^-=(y_c^-,0,w_c^-)\)} by
\begin{equation}\label{eq:Rz}
\mathcal{R}_z^{(c)}:=\frac{\beta_{AC}y_c^-}{\mu+\delta+\beta_{RA}w_c^-}.
\end{equation}
Then:
\begin{enumerate}
    \item If \(\mathcal{R}_y<1\), equivalently \(w_f>w_{\mathrm{thr}}\), then the substance-free equilibrium \(E_f=(0,0,w_f)\) is locally asymptotically stable in the full \(yzw\) system.
    \item If \(0<w_c^-<w_{\mathrm{thr}}\) and \(\mathcal{R}_z^{(c)}<1\), then the SUD-free equilibrium \(E_c^-=(y_c^-,0,w_c^-)\) is locally asymptotically stable in the full \(yzw\) system.
\end{enumerate}
\end{theorem}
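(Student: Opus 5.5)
The plan is to linearize the reduced \(yzw\)-system \eqref{eq:scaled_reduced} at each boundary equilibrium and exploit the block-triangular structure that the invariant face \(z=0\) induces in the Jacobian. At any point with \(z=0\), the \(z\)-equation is \(z'=z\,g(y,z,w)\) with \(g=\beta_{AC}y-\mu-\delta-(1-\phi)\delta z-\beta_{RA}w\). Hence the row of the Jacobian corresponding to \(z\) is \((0,\,g(y,0,w),\,0)\), and \(g(y,0,w)\) is an eigenvalue. The remaining two eigenvalues are those of the \(2\times2\) block in \((y,w)\), i.e. the Jacobian of the SUD-free subsystem \eqref{eq:scaledphi1z0}. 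Note that on \(z=0\) the \(y\)- and \(w\)-equations of \eqref{eq:scaled_reduced} coincide with \eqref{eq:scaledphi1z0} for every \(\phi\), since all \(\phi\)-dependence is multiplied by \(z\). Ordering variables as \((y,w,z)\), the Jacobian is block upper triangular with blocks \(J_{yw}\) and \(g\), so its spectrum is \(\sigma(J_{yw})\cup\{g\}\).

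For part 1, at \(E_f=(0,0,w_f)\) the transversal eigenvalue is \(g=-\mu-\delta-\beta_{RA}w_f<0\) automatically. The \((y,w)\)-block at \((0,w_f)\) is itself triangular, because \(y'=y[\cdots]\). Its diagonal entries are \(\beta_{CS}(1-w_f)-\mu-\beta_{RC}w_f\) and \(\partial_w\) of the \(w\)-equation, namely \(-\mu+\beta_{RS}(1-w_f)-\beta_{RS}w_f\). The first entry is negative exactly when \(\mathcal R_y<1\). Rearranging shows \(\beta_{CS}(1-w_f)<\mu+\beta_{RC}w_f\) is equivalent to \(w_f>(\beta_{CS}-\mu)/(\beta_{CS}+\beta_{RC})=w_{\mathrm{thr}}\), which gives the claimed equivalence. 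For the second entry, use the equilibrium relation \(\mu-\beta_{RS}(1-w_f)=\mu\rho/w_f\) to rewrite it as \(-\mu\rho/w_f-\beta_{RS}w_f<0\). This is the same computation underlying Theorem~\ref{thm:drugfree_wf} and Theorem~\ref{thm:yw}(i); I can cite that result directly, since it says local stability of \(E_f\) in the subsystem holds iff \(w_f>w_{\mathrm{thr}}\). The \(z\) eigenvalue must be handled carefully, because the off-diagonal \(z\)-column contains the term \(\beta_{AS}(1-w)+\beta_{RA}w\) in \(\partial y'/\partial z\). This does not affect the spectrum thanks to the block structure. Hence all three eigenvalues are negative.

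For part 2, at \(E_c^-=(y_c^-,0,w_c^-)\) the transversal eigenvalue is \(g=\beta_{AC}y_c^--\mu-\delta-\beta_{RA}w_c^-\). This is negative iff \(\mathcal R_z^{(c)}<1\), using that the denominator is positive. The \((y,w)\)-block is exactly the subsystem Jacobian at \(E_c^-\). By Theorem~\ref{thm:yw}(ii)(a), the hypothesis \(0<w_c^-<w_{\mathrm{thr}}\) guarantees that \(E_c^-\) exists with \(y_c^->0\) and is locally asymptotically stable in the subsystem. So both eigenvalues of that block have negative real part, and linearized stability follows.

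The main obstacle is the justification that the \((y,w)\)-block has eigenvalues with negative real part at \(E_c^-\), since everything else is bookkeeping. I will lean on Theorem~\ref{thm:yw}(ii)(a). If a self-contained check is needed, I would verify trace\(<0\) and determinant\(>0\) using the equilibrium identities \(\beta_{CS}(1-y-w)-\mu-\beta_{RC}w=0\) and \(\mu-\beta_{RS}(1-y-w)=\mu\rho/w\). These reduce the diagonal entries to \(-\beta_{CS}y_c^-\) and \(-\mu\rho/w_c^--\beta_{RS}w_c^-\), so the trace is clearly negative. The determinant sign would then follow from the slope comparison of the two nullclines at the smaller root \(w_c^-\) of the quadratic \(\beta_{RS}\beta_{RC}w^2-\mu(\beta_{CS}-\beta_{RS})w+\mu\rho\beta_{CS}=0\).
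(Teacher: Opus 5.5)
Your proposal is correct and follows the same route as the paper. You linearize at each boundary equilibrium, use the invariance of \(z=0\) to make the Jacobian block-triangular so that the \(z\)-row contributes the transversal eigenvalue \(\beta_{AC}y-\mu-\delta-\beta_{RA}w\), and then either read off the \((y,w)\)-block eigenvalues directly (at \(E_f\)) or cite them from Theorem~\ref{thm:yw} (at \(E_c^-\)). Two small notes: your entry \(\partial y'/\partial z=\beta_{AS}(1-w)+\beta_{RA}w\) is more accurate than the paper's displayed \(J(E_f)\), which omits \(\beta_{RA}w_f\) (harmless for the eigenvalues). Also, your determinant check at \(E_c^-\), like the paper's, yields \(\det J>0\) only when \(\Delta>0\); at \(\Delta=0\) one has \(w_c^-=w_c^+\), the determinant vanishes, and the equilibrium is non-hyperbolic.
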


\begin{theorem}[A permanence criterion for the reduced \(yzw\) system]\label{thm:yzw_persistence}
Assume
\[
0<w_c^-<w_{\mathrm{thr}}<w_c^+
\qquad\text{and}\qquad
\mathcal{R}_z^{(c)}>1.
\]
Then the problematic-use class \(z\) is uniformly persistent in the reduced \(yzw\) system, thus the reduced \(yzw\) system is permanent.
\end{theorem}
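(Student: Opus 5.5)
We will prove the statement with the standard acyclicity (Butler--McGehee / Hale--Waltman) persistence theorem, applied to the boundary face \(\{z=0\}\) of the compact, positively invariant simplex
\[
\Delta_3=\{(y,z,w)\in\mathbb{R}_+^3:\ y+z+w\le 1\}.
\]
First we check that \(\Delta_3\) is positively invariant for \eqref{eq:scaled_reduced}. This follows from Theorem~\ref{posbound} through the change of variables \(x=S/N\), etc., since \(N(t)>0\). Hence the semiflow is dissipative with a compact global attractor.

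Set \(X_0=\{(y,z,w)\in\Delta_3: z>0\}\) and \(\partial X_0=\{z=0\}\). The factor form \(z'=z[\cdots]\) shows that both sets are forward invariant.

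The next step is to identify the maximal invariant set \(M_\partial\) in \(\partial X_0\) and the limit sets of orbits starting there. On \(z=0\) the system is exactly the SUD-free subsystem \eqref{eq:scaledphi1z0}. Under the hypothesis \(0<w_c^-<w_{\mathrm{thr}}<w_c^+\), Theorem~\ref{thm:yw}(iv)(b) gives that \(E_c^-\) is globally asymptotically stable in the interior \(\{y>0,w>0\}\). Two boundary pieces remain.

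On \(y=0\) the \(w\)-equation is \eqref{eq:scaledphi1yz0}, and all orbits tend to \(E_f=(0,0,w_f)\) by Theorem~\ref{thm:drugfree_wf}. Since \(w_c^+>w_{\mathrm{thr}}\) excludes \(E_c^+\), the relation \(w_f<w_{\mathrm{thr}}\) should follow from the ordering of the quadratic roots: \(w_f\) sits between \(w_c^\pm\) as a sign change of the \(y\)-nullcline/\(w\)-nullcline intersection. This is the one algebraic check we need, and we would verify it directly from the quadratic \(\beta_{RS}\beta_{RC}w^2-\mu(\beta_{CS}-\beta_{RS})w+\mu\rho\beta_{CS}\). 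It makes \(E_f\) a saddle within the face, repelling in the \(y\)-direction.

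Also \(w'=\mu\rho>0\) at \(w=0\), so orbits leave \(w=0\) immediately. Consequently every \(\omega\)-limit set of an orbit in \(\partial X_0\) is \(\{E_f\}\) or \(\{E_c^-\}\). The covering \(\{E_f\},\{E_c^-\}\) is acyclic, because \(E_c^-\) cannot flow back to \(E_f\): \(E_c^-\) is stable in the face, and the only connecting orbits run from \(E_f\) to \(E_c^-\).

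The main step is to show that \(W^s(E_f)\cap X_0=\emptyset\) and \(W^s(E_c^-)\cap X_0=\emptyset\). For \(E_c^-\) we use \(\mathcal{R}_z^{(c)}>1\): at \(E_c^-\) the transversal eigenvalue is \(\beta_{AC}y_c^--\mu-\delta-\beta_{RA}w_c^->0\). Suppose an orbit with \(z(0)>0\) stays in a small neighborhood of \(E_c^-\) for all large \(t\). Then
\[
\frac{z'}{z}=\beta_{AC}y-\mu-\delta-(1-\phi)\delta z-\beta_{RA}w\ge \epsilon>0
\]
there, so \(z\) grows exponentially, which is a contradiction.

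For \(E_f\) the \(z\)-eigenvalue \(-(\mu+\delta+\beta_{RA}w_f)\) is negative, so the argument must instead use the \(y\)-direction. Near \(E_f\) with \(y,z\) small we have
\[
(y+z)'\ge y\bigl[\beta_{CS}(1-w_f)-\mu-\beta_{RC}w_f-\epsilon\bigr]+z[\cdots].
\]
Since \(\mathcal{R}_y>1\), the \(y\) coefficient is positive. However, the \(z\) coefficient \(\beta_{AS}(1-w)+\beta_{RA}w-\mu-\delta\) may be negative. We therefore expect this step to be the main obstacle.

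We would handle it by using the Perron eigenvector of the linearization at \(E_f\) restricted to \((y,z)\). This block is a Metzler matrix with positive off-diagonal \(\beta_{AS}(1-w_f)+\beta_{RA}w_f\) in the \(y\)-row. Its spectral bound is positive because the diagonal entry \(\beta_{CS}(1-w_f)-\mu-\beta_{RC}w_f>0\). Taking the left Perron vector \(v>0\), the function \(V=v\cdot(y,z)\) satisfies \(V'\ge(s-\epsilon)V>0\) near \(E_f\), which shows that no orbit in \(X_0\) converges to \(E_f\).

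With these ingredients the acyclicity theorem yields uniform persistence of \(z\), and by the same argument of \(y\) and \(x\). Separately, \(w\) is uniformly persistent because \(w'\ge\mu\rho-w(\mu+(1-\phi)\delta+\beta_{RS})\). Together these give permanence.
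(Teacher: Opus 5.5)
Your proof is correct, but it follows a different route from the paper.

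\textbf{The paper's route.} It takes $P=yz$ as an average Lyapunov function. It checks that $y'/y=(\mu+\beta_{RC}w_f)(\mathcal R_y-1)>0$ at $E_f$. It checks that $z'/z\to(\mu+\delta+\beta_{RA}w_c^-)(\mathcal R_z^{(c)}-1)>0$ along orbits of the face $\{z=0\}$. It then cites the average-Lyapunov persistence theorem to get $y$ and $z$ persistent at once.

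\textbf{Your route.} You apply the acyclicity theorem to the boundary $\{z=0\}$, which is genuinely forward invariant. You read off its $\omega$-limit sets $\{E_f\}$ and $\{E_c^-\}$ from Theorem~\ref{thm:yw}, and exclude $X_0$ from both stable sets.

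\textbf{What each buys.} The paper's version is shorter, but its sketch does not deal with the issue you identified as the main obstacle. The set $\{y=0\}$ is not invariant in the full system, since $y'=z(\beta_{AS}x+\beta_{RA}w)>0$ there. So $\frac{d}{dt}\ln(yz)$ does not extend continuously to the boundary. Moreover, the $z$-rate at $E_f$ is $-(\mu+\delta+\beta_{RA}w_f)<0$, so the positive $y$-rate alone does not show that $E_f$ repels orbits with $z>0$.

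Your weighted function $V=y+\kappa z$ handles this. Here $\kappa=b/(a+c)$, with $a=\lambda_y^{(f)}$, $b=\beta_{AS}(1-w_f)+\beta_{RA}w_f$ and $c=\mu+\delta+\beta_{RA}w_f$. Since $b-\kappa c=a\kappa$, you get $V'\ge(a-\epsilon')V$ near $E_f$. Your off-diagonal entry, including $\beta_{RA}w_f$, is the correct one; the paper's $J(E_f)$ leaves that term out. The cost of your route is that persistence of the remaining variables must be argued separately.

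\textbf{Two points to tighten.}

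First, ``$w_f$ lies between $w_c^\pm$'' is true, but it does not order $w_f$ against $w_{\mathrm{thr}}$, which also lies between them. Let $F$ be your quadratic divided by $\beta_{CS}$, and let $G(w)=\mu\rho+(\beta_{RS}-\mu)w-\beta_{RS}w^2$. Use the identity $F(w)-G(w)=\frac{\beta_{RS}(\beta_{CS}+\beta_{RC})}{\beta_{CS}}\,w\,(w-w_{\mathrm{thr}})$. It gives $G(w_{\mathrm{thr}})=F(w_{\mathrm{thr}})<0$, hence $w_f<w_{\mathrm{thr}}$. This is part (iv) of the proof of Theorem~\ref{thm:yw}.

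Second, ``the same argument'' does not literally apply to $y$, because $\{y=0\}$ is not forward invariant. Use direct comparison instead:
\begin{itemize}
\item Note that $\rho<1$ is forced. Indeed $G(\rho)=\beta_{RS}\rho(1-\rho)\ge0$ gives $\rho\le w_f<w_{\mathrm{thr}}<1$.
\item Then $x'\ge\mu(1-\rho)-K_1x$ gives uniform persistence of $x$.
\item Finally $y'\ge\beta_{AS}xz-K_2y$ gives uniform persistence of $y$ once $x$ and $z$ are bounded below.
\end{itemize}
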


\noindent\textbf{Theoretical implications.}
Theorem~\ref{thm:yzw_stability}-\ref{thm:yzw_persistence} identifies two distinct threshold mechanisms in adolescent substance-use dynamics. The threshold
\[
\mathcal{R}_y=\frac{\beta_{CS}(1-w_f)}{\mu+\beta_{RC}w_f}
\]
determines whether the casual-use class can invade a largely substance-free school environment. It increases with initiation pressure \(\beta_{CS}\) and decreases with turnover \(\mu\) and protective return \(\beta_{RC}w_f\). The threshold
\[
\mathcal{R}_z^{(c)}=\frac{\beta_{AC}y_c^-}{\mu+\delta+\beta_{RA}w_c^-}
\]
determines whether the problematic-use class can invade the SUD-free equilibrium once casual use is already established. It increases with escalation pressure \(\beta_{AC}\) and decreases with turnover \(\mu\), disengagement \(\delta\), and protective de-escalation \(\beta_{RA}w_c^-\). Thus the model separates initiation and escalation into related but distinct dynamical mechanisms. Biologically, this indicates that preventing first use and preventing progression to more severe use are different intervention problems \cite{Eisenberg2014,Schuler2019,Woodward2023,Watts2024,Loke2013,Stritzel2021,Hawkins1992}. From a policy perspective, \(\mathcal R_y>1\) points to the need for stronger universal prevention, whereas \(\mathcal R_z^{(c)}>1\) indicates the need for targeted intervention among already-using students.

\medskip
\noindent\textbf{A numerical example of full-system bistability for \(\phi=1\).}
We now present a numerical example illustrating bistability in the full scaled system. This example is intended as numerical evidence for the full \(yzw\) system and should not be interpreted as a direct corollary of Theorem~\ref{thm:yzw_persistence}.

For
\[
\beta_{CS}=3.0237,\quad
\beta_{AS}=1.6502,\quad
\beta_{RC}=2.6167,\quad
\beta_{AC}=3.3498,\quad
\beta_{RA}=1.9221,\quad
\beta_{RS}=0.8689,
\]
\[
\mu=0.4926,\quad
\delta=0.2383,\quad
\rho=0.07859,
\]
the full scaled system admits a locally asymptotically stable substance-free equilibrium
\[
E_f\approx(0.4811,\,0,\,0,\,0.5189),
\]
and a locally asymptotically stable interior equilibrium
\[
E^*\approx(0.3604,\,0.3420,\,0.0820,\,0.2157).
\]
Numerical simulations indicate that trajectories with sufficiently small initial substance-use levels converge to \(E_f\), whereas trajectories with sufficiently large initial substance-use levels converge to \(E^*\). Thus the full scaled system exhibits bistability between a substance-free equilibrium and a stable interior equilibrium.

For this parameter set,
\[
w_f \approx 0.5189,\qquad
w_{\mathrm{thr}} \approx 0.4487,\qquad
w_c^- \approx 0.1786,\qquad
w_c^+ \approx 0.2882,
\]
so
\[
0<w_c^-<w_c^+<w_{\mathrm{thr}}.
\]
Accordingly, this example lies outside the sufficient hypotheses of Theorem~\ref{thm:yzw_persistence} and should be interpreted as numerical evidence for bistability beyond the theorem’s sufficient parameter regime.

\subsection{Interior equilibria of the full reduced system when \texorpdfstring{$\phi=1$}{phi=1}}

We now specialize to the conserved-population regime \(\phi=1\), where the reduced system is \eqref{eq:scaledphi10}. Define
\begin{equation}\label{eq:wzplus}
w_z^+
=
\frac{
\left(1-\frac{\mu+\delta}{\beta_{AC}}-\frac{\mu}{\beta_{RS}}\right)
+
\sqrt{
\left(1-\frac{\mu+\delta}{\beta_{AC}}-\frac{\mu}{\beta_{RS}}\right)^2
+
4\left(1+\frac{\beta_{RA}}{\beta_{AC}}\right)\frac{\mu\rho}{\beta_{RS}}
}
}{
2\left(1+\frac{\beta_{RA}}{\beta_{AC}}\right)
}.
\end{equation}

\begin{theorem}[Interior equilibrium for the \(yzw\) system when \(\phi=1\)]\label{thm:phi1_interior_yzw}
Assume \(\phi=1\). Any interior equilibrium \(E^*=(y^*,z^*,w^*)\in\Omega\) with \(y^*,z^*,w^*>0\) satisfies
\begin{equation}\label{eq:phi1_y_of_w}
y^*=\frac{\mu+\delta+\beta_{RA}w^*}{\beta_{AC}},
\qquad
x^*=\frac{\mu(w^*-\rho)}{\beta_{RS}w^*},
\qquad
z^*=1-w^*-y^*-x^*.
\end{equation}
The admissible interval for \(w^*\) is
\[
w^*\in\mathcal W=(\rho,\min\{1,w_z^+\}),
\]
and \(w^*\) must satisfy the scalar equation
\begin{equation}\label{eq:w_scalar_phi1_clean}
\frac{
[(\beta_{RS}-\mu)w^*+\mu\rho-\beta_{RS}(w^*)^2]
[(\beta_{AS}\mu-\mu\beta_{RS}-\delta\beta_{RS})w^*-\beta_{AS}\mu\rho]
}
{
\beta_{RS}w^*
\big[\beta_{RC}\beta_{RS}(w^*)^2
+(\beta_{AS}-\beta_{CS})\mu(w^*-\rho)
-\delta\beta_{RS}w^*\big]
}
=
\frac{\mu+\delta+\beta_{RA}w^*}{\beta_{AC}}.
\end{equation}
Moreover:
\begin{enumerate}
    \item If \(\beta_{AC}<\mu+\delta\), then the \(yzw\) system has no interior equilibrium and \(\lim_{t\to\infty}z(t)=0\).
    \item If the scalar equation \eqref{eq:w_scalar_phi1_clean} admits a root \(w^*\in\mathcal W\), then the system has an interior equilibrium. In particular, under the hypotheses of Theorem~\ref{thm:yzw_persistence}, at least one interior equilibrium exists.
\end{enumerate}
\end{theorem}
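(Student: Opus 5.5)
The plan is to derive the equilibrium relations one equation at a time from \eqref{eq:scaledphi10}, working with the proportion $x=1-y-z-w$ alongside $y,z,w$.

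\emph{Step 1: the formulas \eqref{eq:phi1_y_of_w}.} Since $z^*>0$, setting $z'=0$ and dividing by $z^*$ gives $\beta_{AC}y^*=\mu+\delta+\beta_{RA}w^*$, which is the formula for $y^*$. The $w'$ equation can be written as $\mu\rho=w^*(\mu-\beta_{RS}x^*)$. Since $w^*>0$, this gives $x^*=\mu(w^*-\rho)/(\beta_{RS}w^*)$. Finally, $z^*=1-w^*-y^*-x^*$ is just the constraint $x+y+z+w=1$.

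\emph{Step 2: the admissible interval $\mathcal W$.}
\begin{itemize}
\item Positivity of $x^*$ forces $w^*>\rho$.
\item For $z^*>0$, substitute $y^*$ and $x^*$ into $z^*=1-w^*-y^*-x^*$ and multiply by $w^*>0$. This gives the condition
\[
-\Bigl(1+\tfrac{\beta_{RA}}{\beta_{AC}}\Bigr)(w^*)^2+\Bigl(1-\tfrac{\mu+\delta}{\beta_{AC}}-\tfrac{\mu}{\beta_{RS}}\Bigr)w^*+\tfrac{\mu\rho}{\beta_{RS}}>0 .
\]
\item The left side is a concave quadratic with positive constant term, so its roots have opposite signs. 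Its positive root is exactly $w_z^+$ in \eqref{eq:wzplus}, so for $w^*>0$ the condition is equivalent to $w^*<w_z^+$.
\item The bound $w^*<1$ is trivial.
\end{itemize}

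\emph{Step 3: the scalar equation \eqref{eq:w_scalar_phi1_clean}.} One equilibrium equation is still unused. By the conservation $x+y+z+w=1$, it is equivalent to use the $x'$ equation instead of the $y'$ equation. With $\phi=1$ this reads
\[
\mu(1-\rho)-\mu x-\beta_{CS}xy-\beta_{AS}xz-\beta_{RS}xw+\beta_{RC}wy+\delta z=0 .
\]
\begin{itemize}
\item Insert $x=x^*(w)$ and $z=1-w-y-x^*(w)$. The equation becomes affine in $y$, so it can be solved as $y=\Phi(w)$ for a rational function $\Phi$.
\item Clearing denominators by $\beta_{RS}w$, the combination $\beta_{RS}w(1-w-x^*)$ equals $(\beta_{RS}-\mu)w+\mu\rho-\beta_{RS}w^2$. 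This should produce the displayed numerator and denominator, though the coefficients still have to be matched to the stated form.
\item Equating $\Phi(w^*)$ with the $y^*$ of Step 1 gives \eqref{eq:w_scalar_phi1_clean}.
\end{itemize}
Conversely, suppose $w^*\in\mathcal W$ solves \eqref{eq:w_scalar_phi1_clean} and the denominator is nonzero. Defining $x^*,y^*,z^*$ by \eqref{eq:phi1_y_of_w} then satisfies all equilibrium equations. These values are positive: $x^*>0$ and $z^*>0$ by Step 2, and $y^*>0$ trivially. This proves the first assertion of item 2.

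\emph{Item 1.} If $\beta_{AC}<\mu+\delta$, then $y^*\ge(\mu+\delta)/\beta_{AC}>1$, contradicting $y^*\le 1$, so no interior equilibrium exists. For the limit, use $y\le1$ in $\Omega$ (Theorem~\ref{posbound}) to get $z'\le z(\beta_{AC}-\mu-\delta)$. Hence $z(t)\le z(0)e^{(\beta_{AC}-\mu-\delta)t}\to0$.

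\emph{Last claim of item 2.} This is the step I expect to be the main obstacle. I would invoke the standard result that a dissipative semiflow on a compact positively invariant set which is uniformly persistent (permanent) has an interior equilibrium; see Hutson, or Zhao's \emph{Dynamical Systems in Population Biology}, Theorem 1.3.7. The hypotheses are:
\begin{itemize}
\item Compactness and forward invariance of the simplex, from Theorem~\ref{posbound}.
\item Permanence, from Theorem~\ref{thm:yzw_persistence}.
\end{itemize}
The care needed is to check that the persistence in Theorem~\ref{thm:yzw_persistence} is uniform with respect to the boundary of the simplex for all of $x,y,z,w$, not only $z$. For $w$ this holds because $w'\ge\mu\rho>0$ at $w=0$. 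For $y$, note that $y'\ge\beta_{AS}xz+\beta_{RA}zw$ near $y=0$, and this term is bounded below once $z$ persists. The remaining coordinate $x$ should be handled by a similar boundary estimate. Once these are checked, the interior equilibrium produced by the theorem must, by the first part, come from a root $w^*\in\mathcal W$ of \eqref{eq:w_scalar_phi1_clean}.
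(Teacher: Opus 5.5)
Your proposal is correct and follows essentially the same route as the paper: you solve $z'=0$ and $w'=0$ for $y^*$ and $x^*$, get $\mathcal W$ from positivity of $x^*$ and of $\Gamma(w)=wz(w)$, and reduce the remaining equilibrium equation to the displayed scalar equation; you then prove item 1 by the comparison $z'\le z(\beta_{AC}-\mu-\delta)$, and deduce the last claim from permanence (Theorem~\ref{thm:yzw_persistence}) via the standard ``permanent dissipative systems have an interior equilibrium'' result. The differences are cosmetic: you use the $x'$-equation, which is legitimate since $x'+y'+z'+w'=0$ on the simplex, and after inserting $x^*(w)$ it gives exactly the paper's relation $y(\mu+\beta_{RC}w-\beta_{CS}x)=z(\beta_{AS}x-\mu-\delta)$ and hence the same equation; you get nonexistence in item 1 directly from $y^*\ge(\mu+\delta)/\beta_{AC}>1$; and you check persistence of every coordinate where the paper simply asserts permanence (for $x$ this follows from $x'\ge\mu(1-\rho)-(\mu+\beta_{CS}+\beta_{AS}+\beta_{RS})x$ with $\rho<1$).
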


\noindent\textbf{Interpretation and policy implications.}
Theorem~\ref{thm:phi1_interior_yzw} characterizes when all four behavioral classes can coexist in the long run under the conserved-population regime \(\phi=1\). Biologically, an interior equilibrium corresponds to a school environment in which susceptible students, casual users, problematic or SUD-level users, and resistant students all persist simultaneously. The condition
\[
\beta_{AC}<\mu+\delta
\]
has a direct interpretation: escalation from casual use into the problematic-use class is too weak to overcome school exit and disengagement, so sustained problematic use cannot persist. Here \(\beta_{AC}\) measures escalation pressure, while \(\mu+\delta\) combines ordinary turnover with problematic-use-related school disengagement. In contrast, when \(\beta_{AC}\) is sufficiently large relative to \(\mu+\delta\), and when the invasion condition \(\mathcal{R}_z^{(c)}>1\) holds, the model predicts that the problematic-use class can be maintained and that an interior equilibrium may exist. The formula
\[
y^*=\frac{\mu+\delta+\beta_{RA}w^*}{\beta_{AC}}
\]
also shows that the long-term level of casual use needed to sustain problematic use depends on the balance between escalation \(\beta_{AC}\), removal \(\mu+\delta\), and protective de-escalation \(\beta_{RA}w^*\). Biologically, this is consistent with evidence that repeated exposure to higher-risk peers promotes escalation, whereas protective peer environments and school-connected supports help interrupt progression to more severe use \cite{Watts2024,Stritzel2021,Hawkins1992,Schuler2019,Woodward2023}. From a policy perspective, this theorem suggests that schools should not only prevent initiation, but also weaken escalation pathways among already-using students. In model terms, lowering \(\beta_{AC}\), increasing \(\beta_{RA}\), reducing \(\delta\), and maintaining high re-engagement capacity are all mechanisms that push the system away from persistent coexistence of problematic use.

\medskip
\noindent\textbf{A numerical example illustrating multiple interior equilibria for the full system when \(\phi=1\).}
Consider \eqref{eq:scaledphi10} with
\[
\beta_{CS}=0.6589,\qquad
\beta_{AS}=4.1563,\qquad
\beta_{RC}=1.6324,\qquad
\beta_{AC}=3.9287,
\]
\[
\beta_{RA}=0.5436,\qquad
\beta_{RS}=0.5455,\qquad
\mu=0.3565,\qquad
\delta=0.1058,\qquad
\rho=0.1707.
\]
For this parameter set, the scalar equation \eqref{eq:w_scalar_phi1_clean} admits two admissible roots, yielding two interior equilibria:
\[
E_1\approx(0.1588,\,0.1489,\,0.4668,\,0.2255),
\qquad
E_2\approx(0.3458,\,0.1678,\,0.1240,\,0.3624).
\]
Numerically, \(E_1\) is locally asymptotically stable, whereas \(E_2\) is a saddle. In addition, the substance-free equilibrium
\[
E_f\approx(0.4505,\,0,\,0,\,0.5495)
\]
is also locally asymptotically stable. Thus the system again exhibits bistability between a stable substance-free equilibrium and a stable interior equilibrium, with the second interior equilibrium acting as a saddle separator.

For this parameter set, however,
\[
\Delta<0,
\]
so the SUD-free subsystem has no interior equilibrium and \(w_c^\pm\) are not defined. Hence this example lies outside the scope of Theorem~\ref{thm:yzw_persistence} and provides numerical evidence that the full system can admit multiple interior equilibria together with bistability.

\begin{corollary}[Sufficient conditions for extinction of problematic and casual use]
\label{cor:zy_to_zero}
Consider the reduced scaled SCAR system \eqref{eq:scaled_reduced} with \(0\le \phi\le 1\).

\begin{enumerate}
    \item If
    \begin{equation}\label{eq:cor_z_to_zero_cond_gen}
    \beta_{AC}<\mu+\delta,
    \end{equation}
    then
    \[
    \lim_{t\to\infty} z(t)=0.
    \]

    \item If, in addition,
    \begin{equation}\label{eq:cor_y_to_zero_cond_gen}
    w_{\mathrm{thr}}<w_c^-,
    \end{equation}
    so that the substance-free equilibrium \(E_f=(0,w_f)\) of the SUD-free subsystem is globally asymptotically stable by Theorem~\ref{thm:yw}(iv)(c), then
    \[
    \lim_{t\to\infty} y(t)=0,
    \qquad
    \lim_{t\to\infty} w(t)=w_f.
    \]
\end{enumerate}

Consequently, under \eqref{eq:cor_z_to_zero_cond_gen} and \eqref{eq:cor_y_to_zero_cond_gen},
\[
\lim_{t\to\infty}(y(t),z(t),w(t))=(0,0,w_f).
\]
That is, the reduced system converges to the substance-free equilibrium \(E_f=(0,0,w_f)\).
\end{corollary}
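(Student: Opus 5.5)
Part 1 follows from a direct bound on the \(z\)-equation of \eqref{eq:scaled_reduced}. On the positively invariant simplex \(\{y,z,w\ge 0,\ y+z+w\le 1\}\), which is inherited from Theorem~\ref{posbound} via the scaling by \(N(t)>0\), we have \(y\le 1\), \(w\ge 0\) and \(z\ge 0\). With \(0\le\phi\le1\), this gives
\[
z'=z\left[\beta_{AC}y-\mu-\delta-(1-\phi)\delta z-\beta_{RA}w\right]\le z\left[\beta_{AC}-(\mu+\delta)\right].
\]
Under \eqref{eq:cor_z_to_zero_cond_gen} the bracket is a negative constant \(-\kappa\). Gronwall then yields \(0\le z(t)\le z(0)e^{-\kappa t}\), so \(z(t)\to 0\) exponentially. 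This step is elementary.

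For part 2, I will treat the \((y,w)\) equations as an asymptotically autonomous perturbation of the SUD-free subsystem \eqref{eq:scaledphi1z0}. Setting \(z=0\) in the \(y\) and \(w\) equations of \eqref{eq:scaled_reduced} gives exactly \eqref{eq:scaledphi1z0}. The remaining terms \(z[\beta_{AS}(1-z-w)+\beta_{RA}w]\), \(-y((1-\phi)\delta+\beta_{AC}+\beta_{AS})z\) and \(-w(1-\phi)\delta z\) are bounded by a constant times \(z(t)\), which decays exponentially by part 1. So the \((y,w)\) system has the form \(u'=F(u)+g(t,u)\) with \(g\to0\) uniformly on the compact set \(\Omega\). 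The limiting system \(u'=F(u)\) has \(E_f=(0,w_f)\) as its unique equilibrium in \(\Omega\) and as a global attractor, by Theorem~\ref{thm:yw}(iv)(c); under \eqref{eq:cor_y_to_zero_cond_gen} there are no interior equilibria and, by (iii), no periodic orbits. I then invoke the Markus--Thieme theory of asymptotically autonomous systems (in the planar form due to Thieme/Castillo-Chavez). The \(\omega\)-limit set of a forward orbit of the full system, projected to \((y,w)\), is nonempty, compact and invariant under the limiting flow, and chain recurrent for it. Hence it must be one of three things: an equilibrium, a periodic orbit, or a cycle of equilibria linked by heteroclinic orbits of \eqref{eq:scaledphi1z0}.

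The main obstacle is ruling out everything except \(E_f\), which also requires handling the boundary \(y=0\). Periodic orbits in \(\Omega^\circ\) are excluded by Theorem~\ref{thm:yw}(iii). On \(\{y=0\}\) the dynamics reduce to \eqref{eq:scaledphi1yz0}, which is monotone toward \(w_f\) by Theorem~\ref{thm:drugfree_wf}, so no periodic orbit lies there either. Since \(E_f\) is the only equilibrium, any heteroclinic cycle would have to be a homoclinic loop at \(E_f\). That is impossible because \(E_f\) is globally asymptotically stable, in particular stable, for the limiting system. Therefore the \(\omega\)-limit set is \(\{E_f\}\), giving \(y(t)\to0\) and \(w(t)\to w_f\). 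Combining this with part 1 yields \((y,z,w)\to(0,0,w_f)\).

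As a fallback that avoids the chain-recurrence machinery, I would use local asymptotic stability of \(E_f\) for the limiting system together with global attraction: this gives a neighborhood \(U\) and a time \(T\) such that every solution of \(u'=F(u)\) starting in \(\Omega\) enters \(U\) within time \(T\). Continuous dependence on the small perturbation \(g\) for large \(t\), plus the robustness of exponential stability at \(E_f\), then traps the perturbed trajectory near \(E_f\). This fallback needs the Jacobian at \(E_f\) to be hyperbolic, which is guaranteed by \(w_f>w_{\mathrm{thr}}\) from Theorem~\ref{thm:yw}(i).
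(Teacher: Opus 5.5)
Your proposal is correct and follows the paper's route. Part 1 is the same comparison bound $z'\le z(\beta_{AC}-\mu-\delta)$, and Part 2 is the same splitting of the $(y,w)$-equations into the SUD-free subsystem \eqref{eq:scaledphi1z0} plus $O(z)$ remainders. Your list of remainders omits $-\beta_{CS}yz$ and $-\beta_{RS}wz$, but both are still $O(z)$, so this is harmless. The one difference is that the paper only asserts that the global attractor of the limit system carries over to the perturbed system, whereas you justify it via Markus--Thieme; Markus's criterion alone (a point of the $\omega$-limit set lying in the basin of a locally asymptotically stable equilibrium of the limit system) already suffices there.

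One caveat concerns invariance of the simplex for $\phi<1$: check the faces of \eqref{eq:scaled_reduced} directly rather than inheriting invariance from Theorem~\ref{posbound}. For example, on $x=0$ one finds $(y+z+w)'=\mu(\rho-1)-(2-\phi)\delta z-\beta_{RC}wy\le 0$. The reason is that differentiating $S/N$ actually gives $+x(1-\phi)\delta z$ in the $x$-equation (and likewise in the others), while the written scaled system has $-x(1-\phi)\delta z$. So \eqref{eq:scaled_reduced} is not literally the image of \eqref{eq:fullmod} under the scaling.
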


\noindent\textbf{Interpretation and policy implications.}
Corollary~\ref{cor:zy_to_zero} separates two levels of elimination. The condition
\[
\beta_{AC}<\mu+\delta
\]
is sufficient to eliminate problematic or SUD-level use, since escalation from casual use is too weak to overcome turnover and removal from the problematic-use class. However, eliminating problematic use alone does not guarantee elimination of substance use altogether. To ensure that casual use also disappears, the school environment must additionally satisfy conditions under which the substance-free equilibrium of the SUD-free subsystem is globally attracting, for example \(w_{\mathrm{thr}}<w_c^-\). Biologically, this means that once severe use has been removed, the remaining casual-use dynamics must still be unable to settle into a sustained positive level. From a policy perspective, the corollary shows that reducing severe use and eliminating substance use altogether are distinct goals: the former requires weakening escalation, whereas the latter also requires a sufficiently strong protective environment to drive the system all the way to the substance-free state.

\begin{remark}
The extinction criteria in Corollary~\ref{cor:zy_to_zero} are structurally the same for \(\phi=1\) and \(\phi<1\). The difference between the two regimes lies in the interpretation of the limiting state. When \(\phi=1\), the total population is conserved, so convergence to \(E_f\) corresponds to a true substance-free equilibrium in the original variables. When \(\phi<1\), the total population satisfies
\[
N'(t)=-(1-\phi)\delta A(t)\le0,
\]
so the system approaches a substance-free composition with a possibly reduced limiting population size due to permanent losses associated with problematic use.
\end{remark}

%%%%%%%%%%%%%%%%%%%%%%%%%%%%%%%%%%%%%%%%%%%%%%%%%%%
\section{Bifurcation diagrams, dynamical outcomes, and biological implications}

The bifurcation analysis complements the theoretical results by showing how one-parameter changes reorganize the equilibrium structure of the scaled SCAR system in the regime \(\phi=1\). Using numerical continuation, we examined representative one-parameter bifurcation diagrams for the problematic-use proportion. Across the parameter ranges explored, the dominant qualitative structure is bistability between a locally asymptotically stable substance-free equilibrium and a locally asymptotically stable interior equilibrium, separated by an interior saddle. Thus, the bifurcation diagrams show that long-term school-level outcomes may depend on both parameter values and initial conditions.

Several recurring patterns appear throughout the bifurcation diagrams. In some parameter regimes, the substance-free equilibrium is the unique attractor, indicating that the school environment cannot sustain long-term substance use. In many others, the system is bistable, with a stable substance-free equilibrium and a stable interior high-use equilibrium coexisting. In this case, the interior saddle acts as a separatrix between attraction to a substance-free state and attraction to a high-use interior state. For several protective or recovery-promoting parameters, the stable interior branch and the saddle branch collide and disappear through a saddle-node bifurcation, leaving only the substance-free equilibrium. In a few cases, especially for $\beta_{CS}$, the branch structure is richer and includes interactions among substance-free, SUD-free, and interior equilibria, indicating that some parameters reorganize the full equilibrium geometry rather than merely shifting equilibrium prevalence.

Figures~\ref{fig:bifurcation_main_2x2} and~\ref{fig:bifurcation_delta_rho_mu} summarize representative bifurcation structures for the baseline parameter set
\[
\mu = 0.3565,\quad
\rho = 0.1707,\quad
\delta = 0.1058,\quad
\phi = 1,
\]
\[
\beta_{AS} = 4.1563,\quad
\beta_{RS} = 0.5455,\quad
\beta_{RC} = 1.6324,\quad
\beta_{AC} = 3.9287,\quad
\beta_{RA} = 0.5436,\quad
\beta_{CS} = 0.6589.
\]
In all panels, blue branches denote locally asymptotically stable equilibria, green branches denote saddle equilibria, and red branches denote unstable equilibria.

Figure~\ref{fig:bifurcation_main_2x2} displays four representative bifurcation diagrams corresponding to escalation, problematic-peer recruitment, casual-user recruitment, and protective peer influence. Panel~\subref{fig:bif_betaAC_A} shows a typical escalation-driven threshold: as $\beta_{AC}$ increases, a stable interior equilibrium and a saddle are created through a saddle-node bifurcation, after which the system becomes bistable. Biologically, this means that if progression from casual use to problematic use becomes sufficiently strong, then a stable interior high-use equilibrium can arise even when the substance-free state remains locally stable. Thus, $\beta_{AC}$ identifies escalation from casual use to problematic use as a key mechanism governing the emergence of high-use school-level dynamics.

Panel~\subref{fig:bif_betaAS_A} highlights the role of $\beta_{AS}$, the influence of problematic-use students on susceptible students. This parameter is especially important in the school setting because it directly captures recruitment pressure from the problematic-use class. Its bifurcation structure shows that increasing $\beta_{AS}$ enlarges the parameter region supporting stable interior high-use equilibria. Biologically, stronger exposure of susceptible students to entrenched problematic users increases the likelihood of movement toward a high-use state. From a policy perspective, this suggests that school measures aimed at reducing the influence of the $A$ class may be critical. Such measures include rapid identification of high-risk clusters, targeted intervention for students already in the problematic-use class, increased supervision, and efforts to reduce opportunities for sustained exposure of susceptible students to severe-use peer networks.

Panel~\subref{fig:bif_betaCS_A} shows that $\beta_{CS}$, the influence of casual users on susceptible students, also plays a central role in the global dynamics. This parameter is particularly important because the casual-use class is not only a source of recruitment into use, but also the upstream pool from which students may later progress into the problematic-use class through the transition governed by $\beta_{AC}$. Increasing $\beta_{CS}$ enlarges the region supporting stable interior equilibria and generates some of the richest branch structure in the model. Thus, the bifurcation diagrams indicate that schools should not focus only on the problematic-use class. Restricting the prevalence and social influence of the casual-use class is also essential for protecting susceptible students and for cutting off the supply of future problematic users. In practice, this supports policies aimed at reducing visible casual use, limiting normalization of occasional use, and intervening early before experimentation becomes widespread.

Panel~\subref{fig:bif_betaRS_A} provides a representative protective bifurcation. As $\beta_{RS}$ increases, the stable interior branch shrinks and eventually disappears through collision with the interior saddle, leaving only the stable substance-free equilibrium. At the same time, the resistant component of the low-use equilibrium increases. Hence, protective peer influence does not merely lower equilibrium prevalence; it restructures the low-use state by enlarging the resistant subpopulation. The bifurcation diagrams for $\beta_{RA}$ and $\rho$ show qualitatively similar protective geometry, although their biological interpretations differ. Together, these parameters represent recovery support, peer protection, and resistant entry into the school environment. Their shared effect is to contract or eliminate the region supporting stable interior equilibria and enlarge the basin of attraction of the low-use state.

Figure~\ref{fig:bifurcation_delta_rho_mu} shows three additional representative bifurcation diagrams for $\delta$, $\rho$, and $\mu$. These parameters influence the system through demographic and school-structure mechanisms rather than direct peer-contact pathways. Panel~\subref{fig:bif_delta_A} shows that increasing $\delta$ generally weakens the stable interior problematic-use branch and can eliminate the interior equilibrium entirely. Mathematically, this appears stabilizing, but biologically its interpretation must be made with caution: a larger $\delta$ may reflect successful removal from substance use through intervention, but it may also reflect school disengagement or dropout. Hence, reducing the high-use branch through larger $\delta$ is not necessarily a desirable outcome unless accompanied by rehabilitation and successful return. In this sense, the $\delta$-bifurcation highlights one of the practical strengths of the model: it distinguishes reductions in problematic use caused by prevention and recovery from reductions caused by educational disengagement.

Panel~\subref{fig:bif_rho_A} shows that increasing $\rho$, the fraction of students entering the school already strongly resistant to substance use, shrinks the region supporting interior equilibria and enlarges the basin of attraction of the substance-free equilibrium. Biologically, this indicates that prevention prior to or at school entry can have a strong system-wide effect. In particular, upstream prevention, family support, community norms, and early school-based orientation programs may reduce the likelihood that the school system moves toward a high-use regime.

Panel~\subref{fig:bif_mu_A} shows that the turnover parameter $\mu$ has a more subtle effect. Because $\mu$ governs both entry and exit from the school population, it acts as a demographic time-scale parameter rather than a direct behavioral parameter. Its bifurcation diagram indicates that changes in student turnover can shift the equilibrium structure and alter bistability, often in a non-monotone way. Biologically, this means that school mobility, graduation flow, and replacement of student cohorts can modify long-term substance-use dynamics even when peer-influence parameters remain unchanged. Thus, $\mu$ is best interpreted as a contextual parameter that shapes the effective exposure time of students to risk and protection within the school environment.

Taken together, the bifurcation diagrams suggest three broad classes of parameter effects. The first consists of the principal risk-promoting parameters $\beta_{AC}$, $\beta_{AS}$, and $\beta_{CS}$. Increasing these parameters enlarges the region where stable interior equilibria exist, but through distinct mechanisms. The parameter $\beta_{AC}$ acts through escalation from casual use to problematic use, $\beta_{AS}$ acts through direct recruitment pressure from problematic users onto susceptible students, and $\beta_{CS}$ acts through the spread of use by the casual-use class. Since casual users can both recruit susceptible students and later progress into the problematic-use class, the bifurcation diagrams identify the casual-use population as an especially important policy target.

The second class consists of protective or recovery-promoting parameters such as $\beta_{RA}$, $\beta_{RC}$, $\beta_{RS}$, and $\rho$. Increasing these parameters shrinks or removes the stable interior equilibrium, although the mechanisms differ. The parameters $\beta_{RA}$ and $\beta_{RC}$ promote de-escalation and recovery, whereas $\beta_{RS}$ and $\rho$ strengthen resistance by building a larger protected subpopulation. In particular, $\rho$ represents an upstream protective mechanism, indicating the importance of prevention before or at school entry.

The third class consists of contextual or structurally mediated parameters such as $\delta$ and $\mu$. Increasing $\delta$ weakens the interior equilibrium mathematically, but its biological meaning depends on whether exit from the problematic-use class represents supported recovery or school dropout. The turnover parameter $\mu$ has non-monotone effects, indicating that it functions primarily as a demographic or time-scale parameter rather than as a direct intervention lever. Thus, while $\delta$ and $\mu$ clearly shape the bifurcation structure, their policy interpretation is more nuanced than that of the direct peer-influence or protective parameters.

The usefulness of the model lies in its ability to connect school-relevant mechanisms to qualitative shifts in long-term outcomes. The bifurcation diagrams show that adolescent substance use is not governed solely by a single invasion threshold. Rather, the system frequently exhibits multistability, so prevention and intervention must account for both prevalence reduction and basin management. In particular, the model identifies several practically meaningful control pathways: reducing escalation from casual to problematic use ($\beta_{AC}$), reducing recruitment pressure from problematic users ($\beta_{AS}$), reducing the prevalence and influence of the casual-use class ($\beta_{CS}$), and strengthening protective peer influence and resistance-building processes ($\beta_{RS}$, $\rho$). In this sense, the model is useful not only for explaining threshold-dependent high-use outcomes, but also for identifying which school-level mechanisms are most likely to move the system away from high-use regimes and toward durable low-use outcomes.

From a policy perspective, the bifurcation structure supports a layered intervention strategy. Schools should simultaneously reduce peer-driven initiation and escalation, especially by limiting recruitment pressure from both casual and problematic users, and strengthen protective culture through mentoring, supervision, prevention programs, and early intervention. The diagrams in panels~\subref{fig:bif_betaAS_A} and~\subref{fig:bif_betaCS_A} indicate that controlling the influence of both problematic and casual users is critical for preventing the emergence of a high-use state. More broadly, the model suggests that effective school policy must aim not only to reduce average substance use, but also to keep the system away from threshold regions where modest increases in harmful peer influence can trigger transition from a low-use state to a high-use state.

Table~\ref{tab:bif_summary} summarizes the dominant bifurcation roles of each parameter together with their dynamical, biological, and policy interpretations.

{\footnotesize
\begin{table}[!htbp]
\centering
\caption{Summary of parameter effects inferred from the bifurcation diagrams of the scaled SCAR system ($\phi=1$).}
\label{tab:bif_summary}
\renewcommand{\arraystretch}{1.2}
\begin{tabular}{|c|p{3.2cm}|p{3.8cm}|p{4.5cm}|}
\hline
\textbf{Parameter} & \textbf{Typical bifurcation role} & \textbf{Dynamical effect} & \textbf{Biological / policy interpretation} \\
\hline
$\beta_{AC}$ & Risk-promoting threshold parameter & Creates stable interior equilibria and bistability & Escalation from casual to problematic use; prevent progression early \\
\hline
$\beta_{AS}$ & Risk-promoting threshold parameter & Expands the region supporting stable interior equilibria & Strong influence of problematic-use students on susceptible students; reduce exposure to entrenched high-risk peer groups \\
\hline
$\beta_{CS}$ & Structural organizer of multistability & Generates rich branch interactions and multiple thresholds & Casual users recruit susceptibles and supply the pool of future problematic users; reduce normalization and prevalence of casual use \\
\hline
$\beta_{RA}$ & Recovery-promoting parameter & Shrinks or removes stable interior equilibria & Promote de-escalation from problematic use; strengthen recovery support \\
\hline
$\beta_{RC}$ & Early recovery parameter & Shrinks and may eliminate stable interior equilibria & Reverse casual use before escalation; expand early intervention \\
\hline
$\beta_{RS}$ & Protective resilience-building parameter & Suppresses the interior branch and enlarges the resistant low-use state & Build protective peer culture and anti-use norms \\
\hline
$\rho$ & Upstream protective-entry parameter & Reduces the region supporting interior equilibria and enlarges the substance-free basin & Prevention before or at school entry; strengthen resistant attitudes at baseline \\
\hline
$\delta$ & Structurally suppressive parameter & Weakens or eliminates the stable interior high-use branch & May reduce problematic use mathematically, but can reflect dropout rather than recovery; interpret with caution \\
\hline
$\mu$ & Contextual demographic parameter & Non-monotone, regime-dependent effects & School turnover alters exposure duration, mixing, and cohort replacement \\
\hline
\end{tabular}
\end{table}
}

\begin{figure}[!htbp]
    \centering
    \subfloat[$\beta_{AC}$: escalation from casual use to problematic use.\label{fig:bif_betaAC_A}]{
        \includegraphics[width=0.4\textwidth]{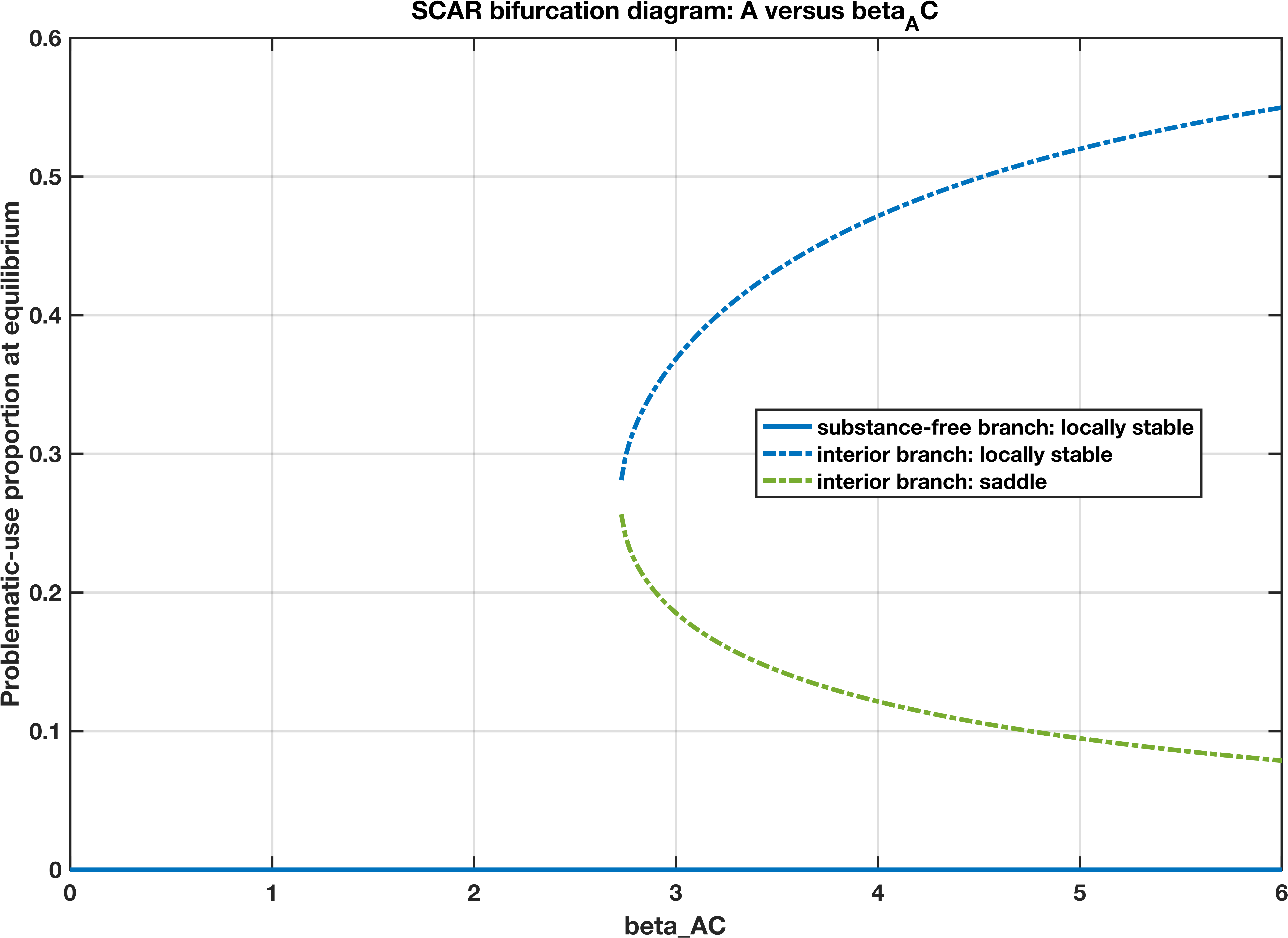}
    }
    \hfill
    \subfloat[$\beta_{AS}$: influence of problematic-use students on susceptible students.\label{fig:bif_betaAS_A}]{
        \includegraphics[width=0.4\textwidth]{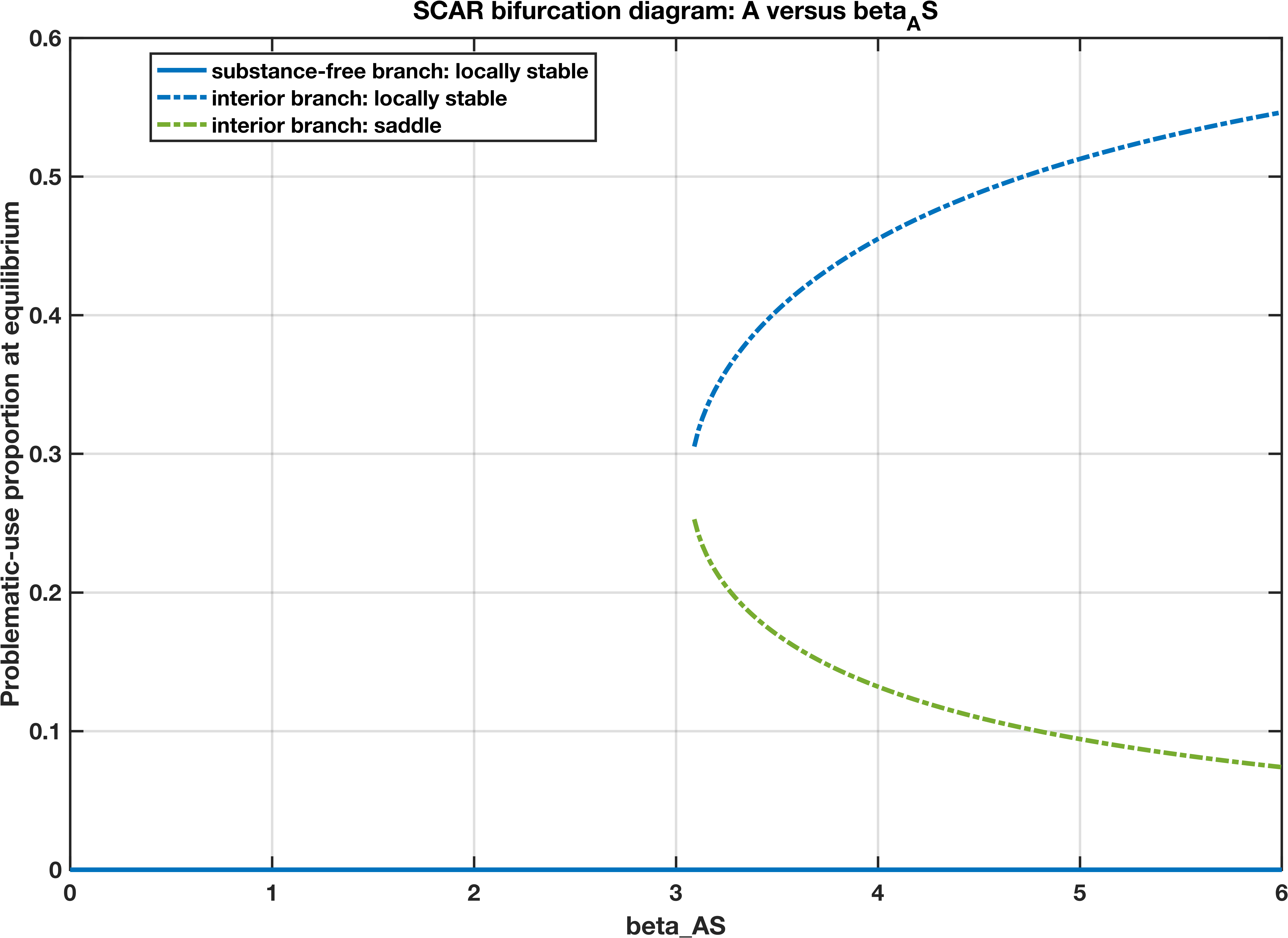}
    }

    \vspace{0.5em}

    \subfloat[$\beta_{CS}$: influence of casual users on susceptible students.\label{fig:bif_betaCS_A}]{
        \includegraphics[width=0.4\textwidth]{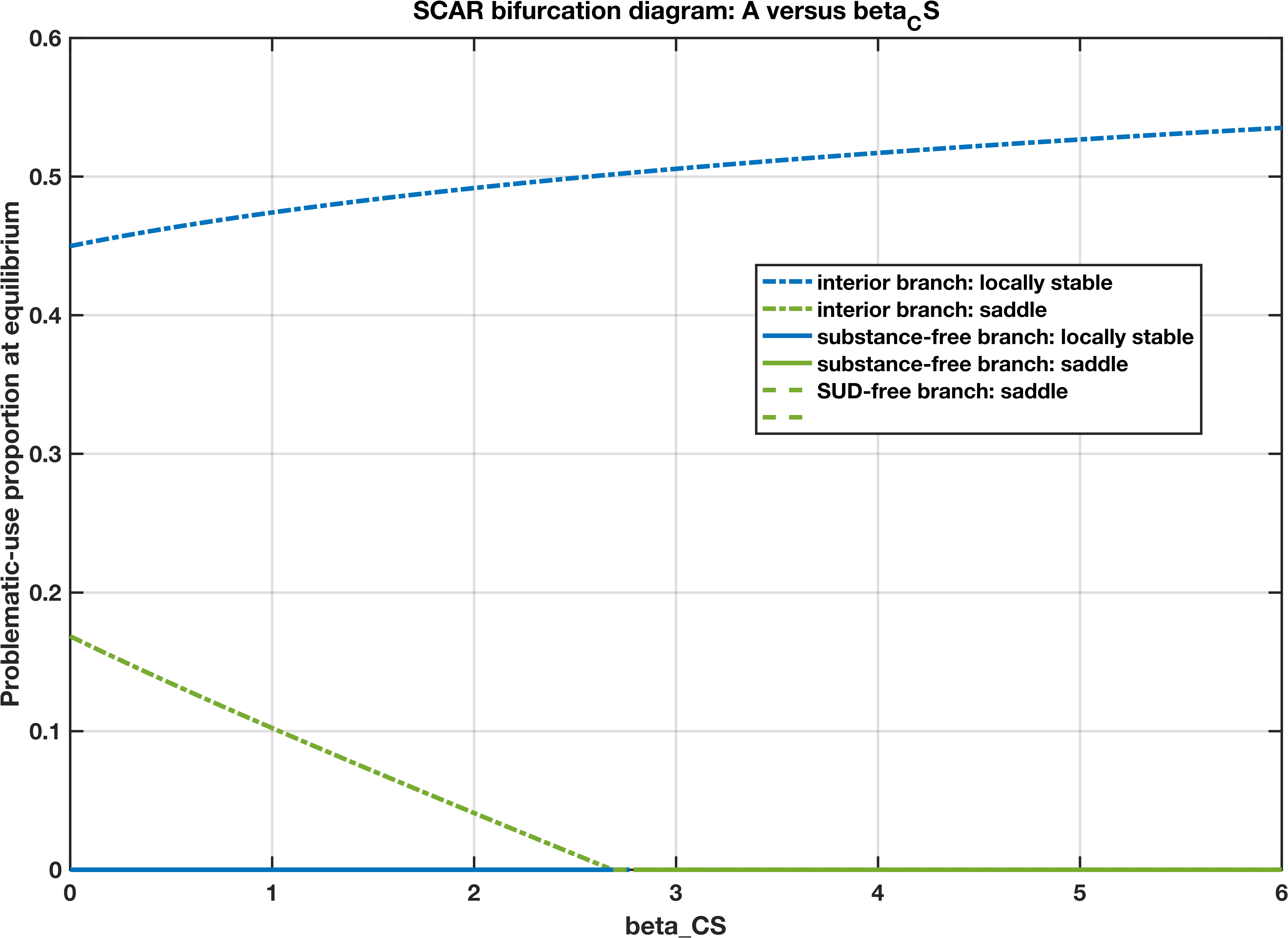}
    }
    \hfill
    \subfloat[$\beta_{RS}$: protective influence from resistant students.\label{fig:bif_betaRS_A}]{
        \includegraphics[width=0.4\textwidth]{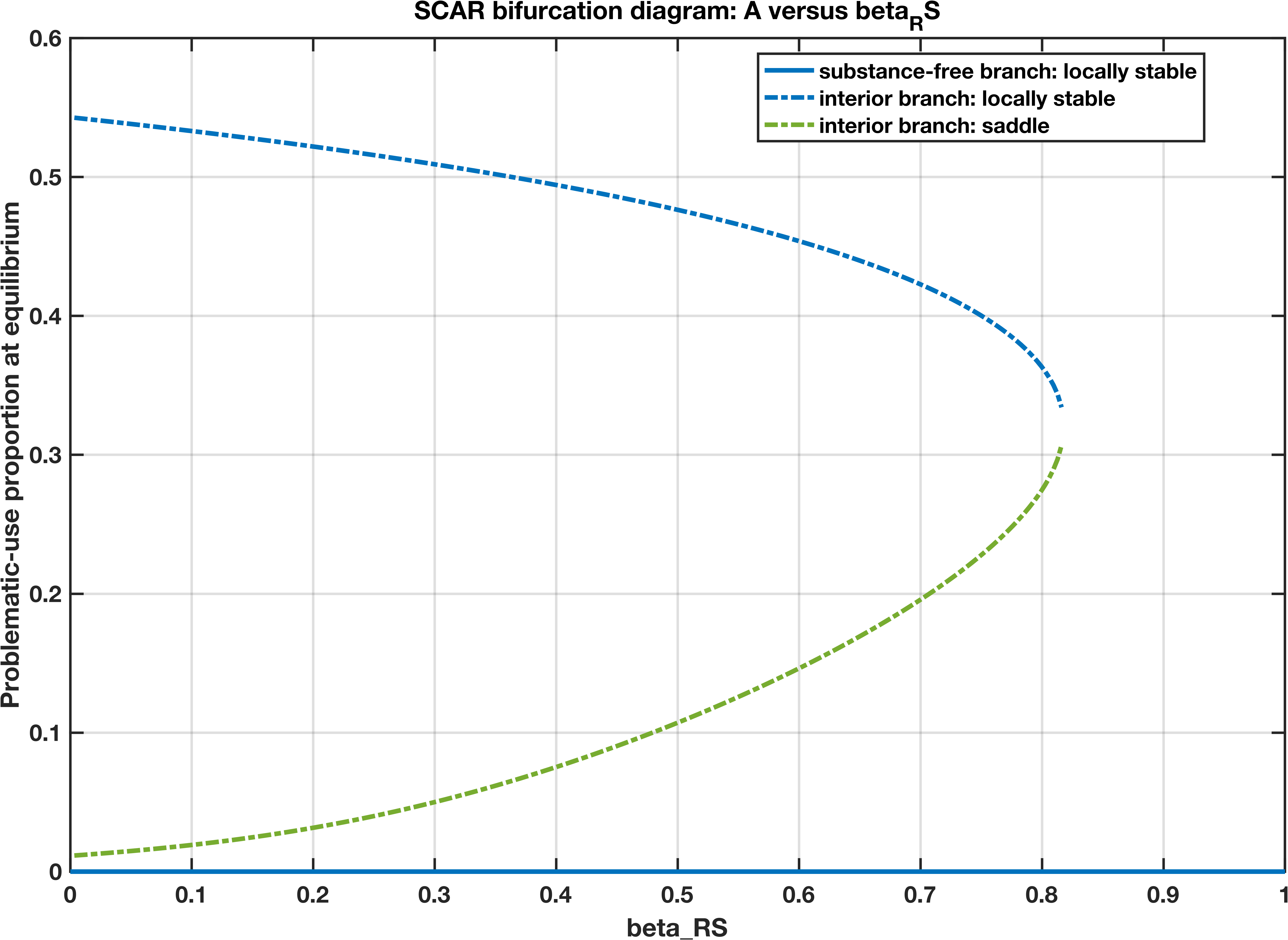}
    }

    \caption{Representative one-parameter bifurcation diagrams for the problematic-use proportion $A$ in the scaled SCAR system with $\phi=1$. Blue branches denote locally asymptotically stable equilibria, green branches denote saddle equilibria, and red branches denote unstable equilibria. The panels illustrate four distinct mechanisms shaping the long-term dynamics: escalation from casual use to problematic use ($\beta_{AC}$), recruitment pressure from problematic users ($\beta_{AS}$), recruitment pressure from casual users ($\beta_{CS}$), and protective peer influence ($\beta_{RS}$).}
    \label{fig:bifurcation_main_2x2}
\end{figure}

\begin{figure}[!htbp]
    \centering
    \subfloat[$\delta$: school disengagement / removal from the problematic-use class.\label{fig:bif_delta_A}]{
        \includegraphics[width=0.3\textwidth]{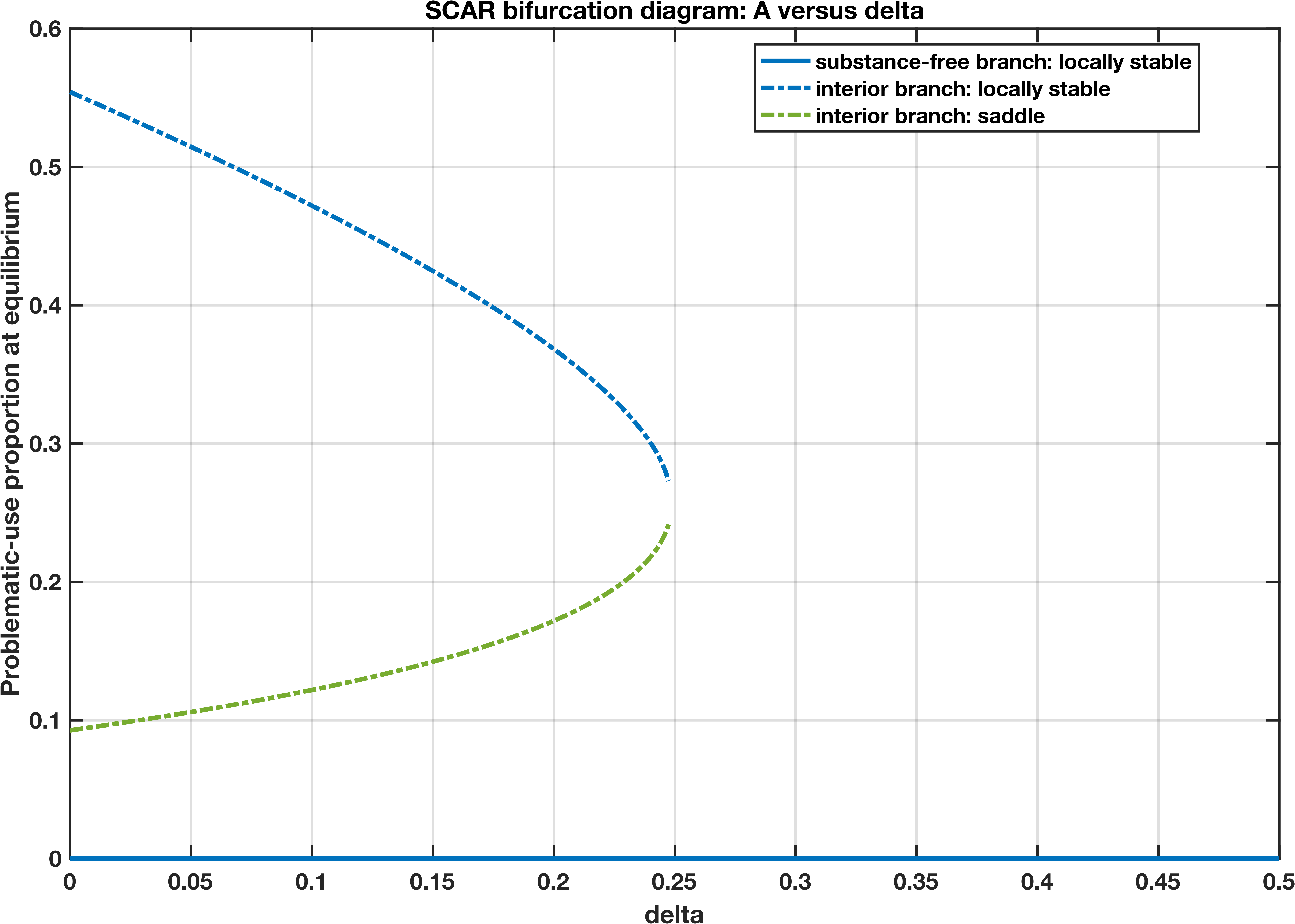}
    }
    \hfill
    \subfloat[$\rho$: fraction entering with strong resistance to substance use.\label{fig:bif_rho_A}]{
        \includegraphics[width=0.3\textwidth]{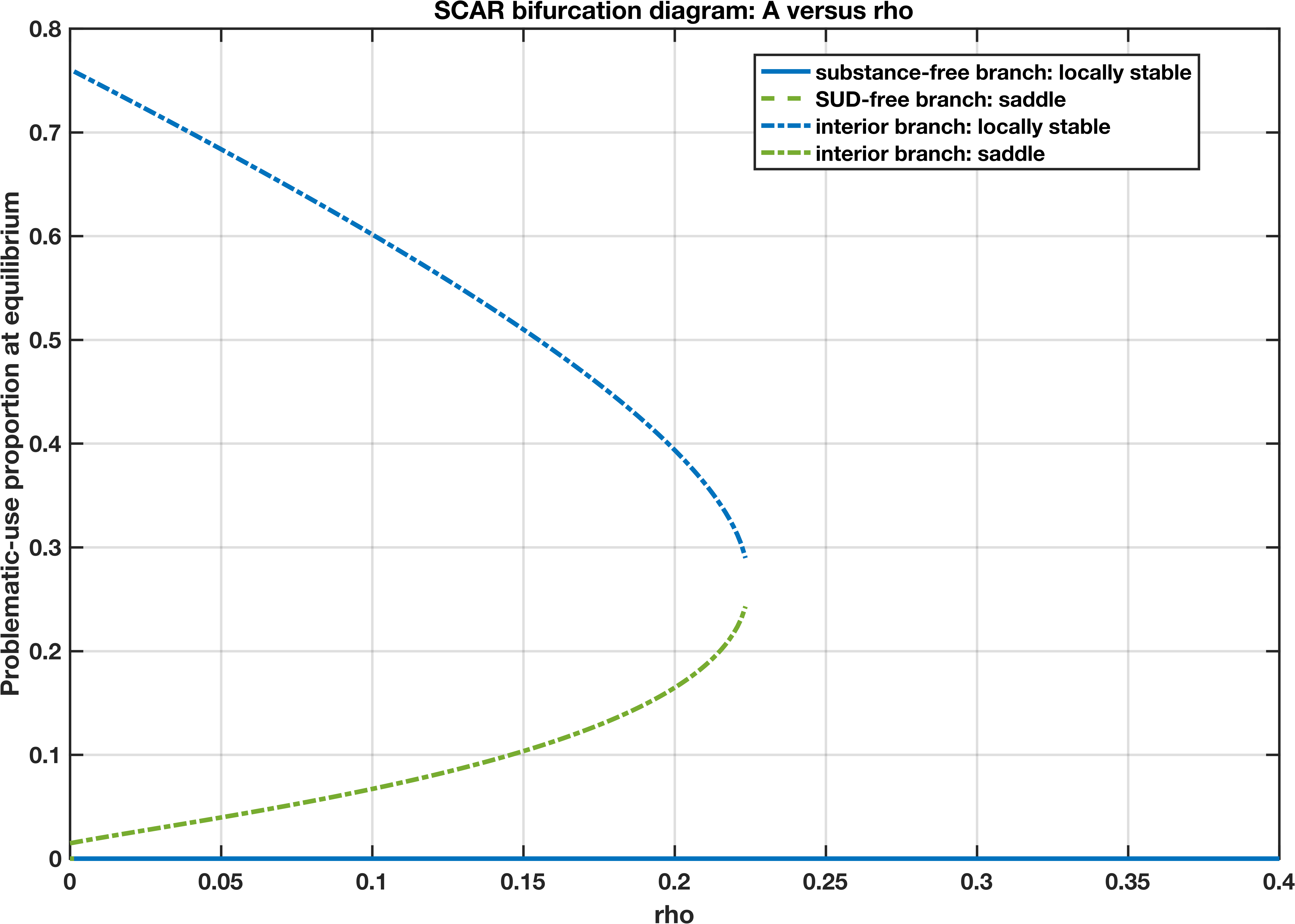}
    }
    \hfill
    \subfloat[$\mu$: school turnover rate.\label{fig:bif_mu_A}]{
        \includegraphics[width=0.3\textwidth]{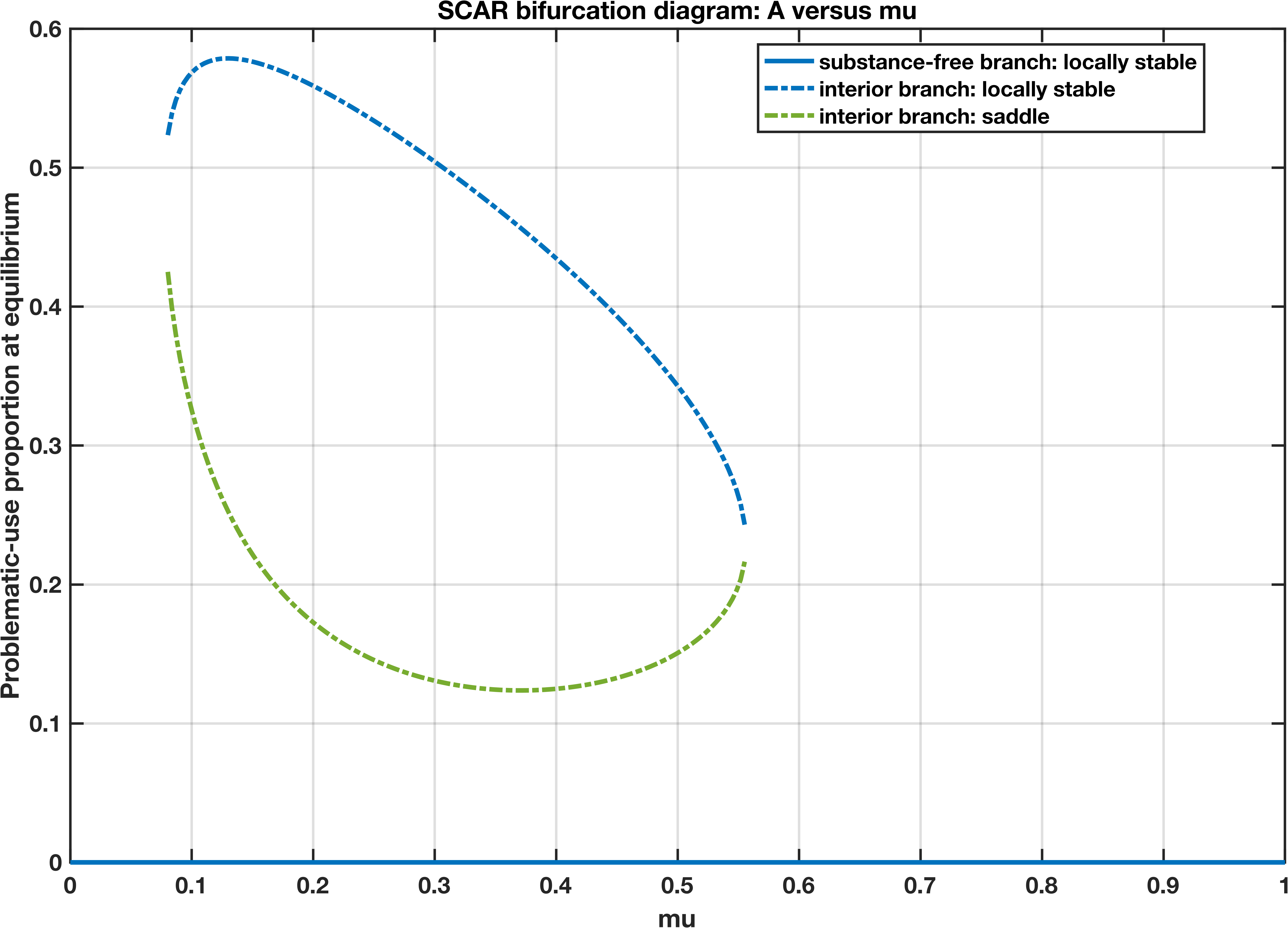}
    }

    \caption{Additional representative bifurcation diagrams for the problematic-use proportion $A$ in the scaled SCAR system with $\phi=1$. Blue branches denote locally asymptotically stable equilibria, green branches denote saddle equilibria, and red branches denote unstable equilibria. The panels illustrate the effects of the disengagement/removal rate $\delta$, the resistant-entry fraction $\rho$, and the school turnover rate $\mu$.}
    \label{fig:bifurcation_delta_rho_mu}
\end{figure}

%%%%%%%%%%%%%%%%%%%%%%%%%%%%%%%%%%%%%%%%%%%%%%%%%%%

\section{Discussion}

This paper develops a school-based SCAR model for adolescent substance use that links peer-driven behavioral transitions with school disengagement and re-entry. The main contribution is not empirical estimation, but a qualitative dynamical characterization of how initiation, escalation, protection, and return interact to determine long-term school-level outcomes.

At the theoretical level, three conclusions are central. First, the model is positively invariant and bounded, so all solutions remain well defined and remain in the biologically relevant region (Theorem~\ref{posbound}). Second, the return parameter \(\phi\) changes the meaning of the long-term dynamics: when \(\phi=1\), interior equilibria may represent genuine coexistence states, whereas when \(\phi<1\), positive equilibria of the scaled system do not correspond to true endemic equilibria in the original variables because problematic use induces net school-population loss. Third, the model separates initiation from escalation. The threshold \(\mathcal R_y\) governs invasion of casual use into the substance-free equilibrium, while \(\mathcal R_z^{(c)}\) governs invasion of problematic use into the SUD-free equilibrium (Theorem~\ref{thm:yzw_stability}). Thus, first use and progression to problematic use are dynamically distinct processes.

The subsystem results sharpen this interpretation. The SUD-free subsystem may converge to the substance-free equilibrium, to a stable casual-use equilibrium, or exhibit bistability between those two states (Theorem~\ref{thm:yw}). In addition, Corollary~\ref{cor:zy_to_zero} distinguishes elimination of problematic use from elimination of substance use altogether: \(\beta_{AC}<\mu+\delta\) is sufficient for \(z(t)\to0\), but extinction of casual use additionally requires that the substance-free equilibrium of the SUD-free subsystem be globally attracting. Thus, suppressing escalation is necessary but not sufficient by itself to eliminate school-level substance use.

The full reduced system adds a further layer through multistability. The permanence result for the reduced \(yzw\)-system (Theorem~\ref{thm:yzw_persistence}) and the interior-equilibrium characterization for \(\phi=1\) (Theorem~\ref{thm:phi1_interior_yzw}) show that the full model can support stable interior states with \(A^*>0\). The numerical examples in Section~3 further indicate that bistability between the substance-free equilibrium and a stable interior equilibrium can occur even outside the sufficient hypotheses of Theorem~\ref{thm:yzw_persistence}. Thus, the eventual school outcome may depend on both structural conditions and initial behavioral composition.

The bifurcation analysis in Figures~\ref{fig:bifurcation_main_2x2} and~\ref{fig:bifurcation_delta_rho_mu} clarifies which mechanisms are most responsible for these regime shifts. Panels~\subref{fig:bif_betaAC_A}--\subref{fig:bif_betaCS_A} show that the main risk-promoting parameters \(\beta_{AC}\), \(\beta_{AS}\), and \(\beta_{CS}\) enlarge the region supporting stable interior high-use equilibria, but through distinct pathways: escalation, recruitment from problematic users, and recruitment from casual users. Among these, \(\beta_{CS}\) is especially important because the casual-use class both recruits susceptible students and supplies the pool from which future problematic users emerge. By contrast, the protective parameters \(\beta_{RA}\), \(\beta_{RC}\), \(\beta_{RS}\), and \(\rho\) contract or eliminate the stable interior branch and enlarge the basin of attraction of low-use outcomes, while \(\delta\) and \(\mu\) act more indirectly through school structure and turnover. These qualitative roles are summarized in Table~\ref{tab:bif_summary}.

Taken together, the theoretical and bifurcation results imply that school-level substance use is governed by both threshold effects and basin structure. A school may therefore move from a low-use regime to a high-use regime either because a parameter crosses a bifurcation threshold or because the state is perturbed across a separatrix in a bistable regime. This provides a dynamical interpretation of early-use clusters, weakening of protective peer culture, and delayed intervention: such events may change not only prevalence levels, but also the eventual attractor of the system.

These results support a layered policy interpretation. Reducing \(\beta_{CS}\) and \(\beta_{AS}\) corresponds to limiting recruitment pressure from casual and problematic users through supervision, peer-norm correction, targeted prevention, and rapid response to emerging clusters of use. Reducing \(\beta_{AC}\) corresponds to weakening escalation among already-using students through early screening, counseling, and treatment-linked intervention. Increasing \(\beta_{RS}\), \(\beta_{RC}\), \(\beta_{RA}\), and \(\rho\) corresponds to strengthening protective peer influence, recovery-supportive environments, and resistant entry into the school system through mentoring, family engagement, prevention programming, and re-entry support. The role of \(\delta\) must be interpreted especially carefully: although increasing \(\delta\) may weaken the high-use branch mathematically, this is not necessarily favorable if it represents dropout rather than successful treatment and return.

The model nevertheless has important limitations. It assumes homogeneous mixing and frequency-dependent contact, whereas real school populations are structured by friendship networks, grade levels, and activity-based interactions. It also compresses re-entry dynamics into a single parameter \(\phi\), although in practice return may depend on treatment access, delays, family support, and school policy. In addition, the distinction between casual and problematic use is a useful but simplified representation of a broader behavioral continuum. These assumptions are appropriate for a baseline dynamical framework, but they limit literal quantitative interpretation.

Future work should extend the model toward greater structural realism through network-based contact patterns, grade stratification, delayed re-entry, or time-dependent prevention and intervention terms. Calibration against longitudinal school or district data would also help connect the present qualitative analysis to measurable school-level outcomes and make the framework more directly useful for data-informed prediction and intervention design.

Overall, the SCAR framework shows that adolescent substance use in schools is governed not only by invasion thresholds, but also by escalation dynamics, multistability, and re-entry structure. Effective intervention therefore requires not only reducing average risk, but also keeping the system away from parameter regimes and state-space regions that support sustained higher-use outcomes.
%%%%%%%%%%%%%%%%%%%%%%%%%%%%%%%%%%%%%%%%%%%%%%%%%%%
\section{Detailed Proofs of Theorems}

\subsection*{Proof of Theorem \ref{posbound}}
\begin{proof}
\textbf{Step 1. Positivity of \(A(t)\).}
For \(A(t)\) we have
\[
\frac{dA}{dt}
= \frac{\beta_{AC}AC}{N} - (\mu+\delta)A - \frac{\beta_{RA}AR}{N}
= A\Big(\frac{\beta_{AC}C}{N} -\mu -\delta -\frac{\beta_{RA}R}{N} \Big).
\]
This is a linear ODE for \(A\) of the form \(A' = A q(t)\), where
\[
q(t)=\frac{\beta_{AC}C(t)}{N(t)}-\mu-\delta-\frac{\beta_{RA}R(t)}{N(t)}
\]
is continuous as long as \(N(t)>0\). Hence
\[
A(t) = A(0)\exp\!\left( \int_0^t q(s)\, ds\right) \ge 0.
\]
If \(A(0)=0\), then \(A(t)\equiv 0\). If \(A(0)>0\), then \(A(t)>0\) for all \(t>0\).

\medskip
\textbf{Step 2. Positivity of \(N(t)\).}
Summing equations~\eqref{eq:S}--\eqref{eq:R} yields
\[
\frac{dN}{dt} = -(1-\phi)\delta A(t)\le 0.
\]
Since \(A(t)\ge 0\), it follows that \(N(t)\) is nonincreasing and
\[
0 \le N(t) \le N(0).
\]
Because \(A(t)\) cannot become negative, the right-hand side of the equation for \(N\) is bounded below, and \(N(t)\) cannot reach zero in finite time provided \(N(0)>0\). Therefore \(N(t)>0\) for all \(t\ge 0\).

\medskip
\textbf{Step 3. Positivity of \(R(t)\).}
The \(R\)-equation is
\[
\frac{dR}{dt} = \mu \rho N + \frac{\beta_{RS}RS}{N} - \mu R.
\]
If \(R(t^*)=0\) at some time \(t^*\), then
\[
R'(t^*)=\mu\rho N(t^*)\ge 0.
\]
Therefore \(R(t)\) cannot cross below zero, and hence \(R(t)\ge 0\) for all \(t\ge 0\).

\medskip
\textbf{Step 4. Positivity of \(S(t)\) and \(C(t)\).}
Consider the equation for \(S\):
\[
\frac{dS}{dt}
= \mu(1-\rho)N
    - \frac{\beta_{CS}CS}{N}
    - \frac{\beta_{AS}AS}{N}
    - \mu S
    - \frac{\beta_{RS}RS}{N}
    + \frac{\beta_{RC}RC}{N}
    + \phi\delta A.
\]
Since the initial conditions of all components are nonnegative, without loss of generality, assume that there is a time \(t^*\) such that \(S(t^*)=0\) such that \(C(t^*)>0\), we obtain
\[
S'(t^*) = \mu(1-\rho)N(t^*) + \phi\delta A(t^*) + \frac{\beta_{RC}R(t^*)C(t^*)}{N(t^*)}\ge 0.
\]
Thus \(S(t)\) cannot cross below zero.

For \(C(t)\),
\[
\frac{dC}{dt}
= \frac{\beta_{CS}CS}{N}
  + \frac{\beta_{AS}AS}{N}
  - \mu C
  - \frac{\beta_{RC}RC}{N}
  - \frac{\beta_{AC}AC}{N}
  + \frac{\beta_{RA}AR}{N}.
\]
At any time \(t^*\) such that \(C(t^*)=0\), we obtain
\[
C'(t^*) = \frac{\beta_{AS}A(t^*)S(t^*)}{N(t^*)}
          + \frac{\beta_{RA}A(t^*)R(t^*)}{N(t^*)} \ge 0.
\]
Hence \(C(t)\) cannot cross below zero.

Similar proof can be applied in the case that $C(t)$ is the first one to have a $t^*$ such that $C(t^*)=0$ with $S(t^*)\geq 0$.

\medskip
\textbf{Step 5. Boundedness.}
Since \(S,C,A,R\ge 0\) and
\[
N(t)=S(t)+C(t)+A(t)+R(t),
\]
and since \(N(t)\le N(0)\) for all \(t\ge 0\), each compartment satisfies
\[
0\le S(t),C(t),A(t),R(t)\le N(t)\le N(0).
\]
Thus every solution remains in the compact region
\[
\Omega=\{(S,C,A,R)\in\mathbb{R}_{+}^{4}: S+C+A+R\le N(0)\}.
\]
Therefore \(\Omega\) is positively invariant and bounded.
\end{proof}

%%%%%%%%%%%%%%%%%%%%%%%%%%%%%%%%%%%%%%%%%%%%%%%%%%%%%%%%%%%%%%%%%%%%%%%%%%%%%%%%%%%%
\subsection*{Proof of Theorem \ref{thm:drugfree_wf}}
\begin{proof}
Define
\[
f(w)=\mu\rho-(\mu-\beta_{RS})w-\beta_{RS}w^2.
\]
Then \eqref{eq:scaledphi1yz0} is the scalar ODE
\[
w'=f(w),
\]
with \(f\) a polynomial, hence locally Lipschitz. Therefore solutions exist and are unique for all \(t\ge 0\).

\medskip
\noindent\textbf{Step 1. Positive invariance of \([0,1]\).}
We evaluate the vector field at the endpoints:
\[
f(0)=\mu\rho\ge 0,\qquad f(1)=\mu\rho-\mu\le 0.
\]
Thus, at \(w=0\) the vector field points into \([0,1]\) (or is tangent if \(\rho=0\)), and at \(w=1\) the vector field points into \([0,1]\) (or is tangent if \(\rho=1\)). By uniqueness of solutions, no trajectory starting in \([0,1]\) can cross the boundary, and therefore \([0,1]\) is positively invariant.

\medskip
\noindent\textbf{Step 2. Existence and uniqueness of the equilibrium in \([0,1]\).}
Equilibria satisfy \(f(w)=0\), that is,
\[
\beta_{RS}w^2+(\mu-\beta_{RS})w-\mu\rho=0.
\]
Solving this quadratic gives
\[
w_{\pm}=\frac{(\beta_{RS}-\mu)\pm\sqrt{(\beta_{RS}-\mu)^2+4\rho\mu\beta_{RS}}}{2\beta_{RS}}.
\]
The positive root is
\[
w_f=\frac{(\beta_{RS}-\mu)+\sqrt{(\beta_{RS}-\mu)^2+4\rho\mu\beta_{RS}}}{2\beta_{RS}}.
\]
For the other root, note that
\[
\sqrt{(\beta_{RS}-\mu)^2+4\rho\mu\beta_{RS}}\ge |\beta_{RS}-\mu|,
\]
with strict inequality if \(\rho>0\). Hence
\[
(\beta_{RS}-\mu)-\sqrt{(\beta_{RS}-\mu)^2+4\rho\mu\beta_{RS}}\le 0,
\]
so \(w_-\le 0\), with \(w_-<0\) if \(\rho>0\). Therefore there is at most one biologically relevant nonnegative equilibrium, namely \(w_f\).

Since \(f(0)\ge 0\) and \(f(1)\le 0\), the intermediate value theorem guarantees at least one root in \([0,1]\). Hence \(w_f\in[0,1]\) and is the unique equilibrium in \([0,1]\).

\medskip
\noindent\textbf{Step 3. Global asymptotic stability on \([0,1]\).}
Because \(\beta_{RS}>0\),
\[
f''(w)=-2\beta_{RS}<0,
\]
so \(f\) is strictly concave on \(\mathbb{R}\). Since \(f\) has exactly one zero in \([0,1]\), namely \(w=w_f\), strict concavity implies that
\[
f(w)>0\quad\text{for }0\le w<w_f,
\qquad
f(w)<0\quad\text{for }w_f<w\le 1.
\]
Let \(w(0)\in[0,1]\). If \(w(0)<w_f\), then \(w'(t)>0\) as long as \(w(t)<w_f\), so \(w(t)\) is increasing and bounded above by \(w_f\). Hence \(w(t)\) converges to a limit \(\ell\le w_f\). Passing to the limit in the equation \(w'=f(w)\) gives \(f(\ell)=0\), so \(\ell=w_f\).

If instead \(w(0)>w_f\), then \(w'(t)<0\) as long as \(w(t)>w_f\), so \(w(t)\) is decreasing and bounded below by \(w_f\). Hence \(w(t)\) converges to a limit \(\ell\ge w_f\), and again \(f(\ell)=0\) implies \(\ell=w_f\).

Therefore, for any initial condition \(w(0)\in[0,1]\),
\[
\lim_{t\to\infty}w(t)=w_f.
\]
This proves global asymptotic stability of \(w_f\) on \([0,1]\).
\end{proof}

%%%%%%%%%%%%%%%%%%%%%%%%%%%%%%%%%%%%%%%%%%%%%%%%%%%%%%%%%%%%%%%%%%%%%%%%%%%%%%%%%%%%
\subsection*{Proof of Theorem \ref{thm:yw}}
\begin{proof}
Let
\[
f(y,w)=\dot y,\qquad g(y,w)=\dot w.
\]
Then the Jacobian matrix of the scaled SUD-free subsystem \eqref{eq:scaledphi1z0} is
\[
J(y,w)=
\begin{pmatrix}
f_y & f_w\\[3pt]
g_y & g_w
\end{pmatrix}
=
\begin{pmatrix}
\beta_{CS}(1-2y-w)-\mu-\beta_{RC}w & -y(\beta_{CS}+\beta_{RC})\\[4pt]
-\beta_{RS}w & \beta_{RS}(1-y-2w)-\mu
\end{pmatrix}.
\]

\medskip
\noindent\textbf{(i) Boundary equilibrium \(E_f=(0,w_f)\).}
Setting \(y=0\) and \(w'=0\) yields the unique positive boundary equilibrium
\[
E_f=(0,w_f),
\qquad
w_f=\frac{(\beta_{RS}-\mu)+\sqrt{(\beta_{RS}-\mu)^2+4\rho\mu\beta_{RS}}}{2\beta_{RS}}.
\]
At \(E_f\), the Jacobian becomes
\[
J(0,w_f)=
\begin{pmatrix}
\beta_{CS}(1-w_f)-\mu-\beta_{RC}w_f & 0\\[4pt]
-\beta_{RS}w_f & \beta_{RS}(1-2w_f)-\mu
\end{pmatrix}.
\]
The eigenvalues are
\[
\lambda_1=\beta_{CS}(1-w_f)-\mu-\beta_{RC}w_f,\qquad
\lambda_2=\beta_{RS}(1-2w_f)-\mu.
\]
Since
\[
w_f>\frac{\beta_{RS}-\mu}{2\beta_{RS}},
\]
we obtain
\[
\lambda_2
=
\beta_{RS}(1-2w_f)-\mu
=
2\beta_{RS}\left(\frac{\beta_{RS}-\mu}{2\beta_{RS}}-w_f\right)<0.
\]
Define
\[
w_{\mathrm{thr}}=\frac{\beta_{CS}-\mu}{\beta_{CS}+\beta_{RC}}.
\]
Then
\[
\lambda_1<0
\iff
\beta_{CS}(1-w_f)-\mu-\beta_{RC}w_f<0
\iff
w_f>w_{\mathrm{thr}}.
\]
Therefore, the boundary equilibrium \(E_f\) is locally asymptotically stable if and only if \(w_f>w_{\mathrm{thr}}\).

\medskip
\noindent\textbf{(ii) Interior equilibria.}
At any interior equilibrium \((y^*,w^*)\), we must have \(y^*>0\), so \(y'=0\) implies
\[
\beta_{CS}(1-y^*-w^*)-\mu-\beta_{RC}w^*=0,
\]
hence
\[
1-y^*-w^*=\frac{\mu+\beta_{RC}w^*}{\beta_{CS}},
\qquad
y^*=1-w^*-\frac{\mu+\beta_{RC}w^*}{\beta_{CS}}.
\]
Substituting this into \(w'=0\) gives
\[
\mu\rho-w^*\left[\mu-\beta_{RS}\left(\frac{\mu+\beta_{RC}w^*}{\beta_{CS}}\right)\right]=0.
\]
After simplification,
\[
\frac{\beta_{RS}\beta_{RC}}{\beta_{CS}}(w^*)^2
-\mu\left(1-\frac{\beta_{RS}}{\beta_{CS}}\right)w^*
+\mu\rho=0.
\]
Thus interior equilibria are determined by the quadratic equation
\[
F(w):=\frac{\beta_{RS}\beta_{RC}}{\beta_{CS}}w^2
-\mu\left(1-\frac{\beta_{RS}}{\beta_{CS}}\right)w
+\mu\rho=0.
\]
Its discriminant is
\[
\Delta=\mu^2(\beta_{CS}-\beta_{RS})^2
-4\mu\rho\,\beta_{RS}\beta_{RC}\beta_{CS}.
\]
If \(\Delta\ge 0\), the roots are
\[
w_c^\pm
=
\frac{\mu(\beta_{CS}-\beta_{RS})\pm\sqrt{\Delta}}{2\beta_{RS}\beta_{RC}},
\]
and the corresponding \(y\)-coordinates are
\[
y_c^\pm
=
\frac{\beta_{CS}-\mu-(\beta_{CS}+\beta_{RC})w_c^\pm}{\beta_{CS}}.
\]
Feasibility requires
\[
0<w_c^\pm<\frac{\beta_{CS}-\mu}{\beta_{CS}+\beta_{RC}},
\]
which is equivalent to \(y_c^\pm>0\).

\medskip
\noindent\textbf{(iii) Stability of interior equilibria.}
At an interior equilibrium \((y^*,w^*)\), using the equilibrium identity
\[
\beta_{CS}(1-y^*-w^*)-\mu-\beta_{RC}w^*=0,
\]
the Jacobian simplifies to
\[
J(y^*,w^*)=
\begin{pmatrix}
-\beta_{CS}y^* & -(\beta_{CS}+\beta_{RC})y^*\\[4pt]
-\beta_{RS}w^* & -\beta_{RS}w^*-\mu\rho/w^*
\end{pmatrix}.
\]
Its trace is
\[
\operatorname{tr}J=-\beta_{CS}y^*-\beta_{RS}w^*-\frac{\mu\rho}{w^*}<0.
\]
Its determinant is
\[
\det J
=
\beta_{CS}y^*\left(\beta_{RS}w^*+\frac{\mu\rho}{w^*}\right)
-(\beta_{CS}+\beta_{RC})\beta_{RS}y^*w^*.
\]
Using the interior-equilibrium relation
\[
\mu\rho=\mu\left(1-\frac{\beta_{RS}}{\beta_{CS}}\right)w^*-\frac{\beta_{RS}\beta_{RC}}{\beta_{CS}}(w^*)^2,
\]
this simplifies to
\[
\det J
=
\frac{\beta_{CS}\mu y^*}{w^*}
\left(
2\rho-\left(1-\frac{\beta_{RS}}{\beta_{CS}}\right)w^*
\right).
\]
Since \(w_c^-<w_c^+\), one checks that \(\det J>0\) at \(w^*=w_c^-\) and \(\det J<0\) at \(w^*=w_c^+\). Therefore \(E_c^-\) is locally asymptotically stable, whereas \(E_c^+\) is a saddle.

\medskip
\noindent\textbf{(iv) Comparison of \(w_f\) and \(w_{\mathrm{thr}}\) when \(w_c^-<w_{\mathrm{thr}}<w_c^+\).}
We now show that if
\[
w_c^-<w_{\mathrm{thr}}<w_c^+,
\]
then necessarily
\[
w_f<w_{\mathrm{thr}},
\]
so the boundary equilibrium \(E_f\) is unstable in the \(y\)-direction.

On the boundary \(y=0\), the \(w\)-equation reduces to
\[
w'=\mu\rho+(\beta_{RS}-\mu)w-\beta_{RS}w^2=:G(w),
\]
whose unique positive root is \(w_f\).

For an interior equilibrium, the \(w\)-equation reduces to
\[
F(w)=\frac{\beta_{RS}\beta_{RC}}{\beta_{CS}}w^2
-\mu\left(1-\frac{\beta_{RS}}{\beta_{CS}}\right)w
+\mu\rho.
\]
Since \(F\) has roots \(w_c^-<w_c^+\), we may write
\[
F(w)=a(w-w_c^-)(w-w_c^+),
\qquad
a=\frac{\beta_{RS}\beta_{RC}}{\beta_{CS}}>0.
\]
Thus if \(w_c^-<w_{\mathrm{thr}}<w_c^+\), then
\[
F(w_{\mathrm{thr}})<0.
\]

Now, by the definition of \(w_{\mathrm{thr}}\),
\[
\beta_{CS}(1-w_{\mathrm{thr}})=\mu+\beta_{RC}w_{\mathrm{thr}}
\iff
1-w_{\mathrm{thr}}=\frac{\mu+\beta_{RC}w_{\mathrm{thr}}}{\beta_{CS}}.
\]
Hence
\[
\begin{aligned}
F(w_{\mathrm{thr}})
&=
\mu\rho-\mu w_{\mathrm{thr}}
+\frac{\beta_{RS}}{\beta_{CS}}w_{\mathrm{thr}}(\mu+\beta_{RC}w_{\mathrm{thr}})\\
&=
\mu\rho-\mu w_{\mathrm{thr}}+\beta_{RS}w_{\mathrm{thr}}(1-w_{\mathrm{thr}})\\
&=
G(w_{\mathrm{thr}}).
\end{aligned}
\]
Therefore \(G(w_{\mathrm{thr}})<0\). Since \(G(0)=\mu\rho>0\) and \(G\) is a concave-down quadratic with unique positive root \(w_f\), it follows that
\[
G(w)<0 \iff w>w_f.
\]
Hence \(G(w_{\mathrm{thr}})<0\) implies \(w_{\mathrm{thr}}>w_f\), as claimed.

\medskip
\noindent\textbf{(v) Absence of periodic orbits.}
We apply the Bendixson--Dulac criterion on
\[
\Omega^\circ=\{(y,w)\in\mathbb{R}_+^2:\ y>0,\ w>0,\ y+w<1\}.
\]
Choose the Dulac function
\[
B(y,w)=\frac{1}{yw}.
\]
Then
\[
Bf=\frac{1}{w}\Big(\beta_{CS}(1-y-w)-\mu-\beta_{RC}w\Big),
\]
so
\[
\frac{\partial}{\partial y}(Bf)=-\frac{\beta_{CS}}{w}.
\]
Also,
\[
Bg=\frac{\mu\rho}{yw}+\frac{\beta_{RS}}{y}(1-y-w)-\frac{\mu}{y},
\]
so
\[
\frac{\partial}{\partial w}(Bg)
=-\frac{\mu\rho}{yw^2}-\frac{\beta_{RS}}{y}.
\]
Therefore,
\[
\frac{\partial}{\partial y}(Bf)+\frac{\partial}{\partial w}(Bg)
=
-\frac{\beta_{CS}}{w}-\frac{\mu\rho}{yw^2}-\frac{\beta_{RS}}{y}<0
\qquad\text{for all }(y,w)\in\Omega^\circ.
\]
Hence the Dulac divergence has strict sign on the simply connected domain \(\Omega^\circ\). By the Bendixson--Dulac criterion, the scaled SUD-free subsystem admits no periodic orbits in \(\Omega^\circ\).

\medskip
\noindent\textbf{(vi). Global phase portrait.}
Since the subsystem is planar, positively invariant in \(\Omega\), and has no periodic orbits in \(\Omega^\circ\), every \(\omega\)-limit set in \(\Omega^\circ\) must be an equilibrium by the Poincar\'e--Bendixson theorem.
If
\[
0<w_c^-<\frac{\beta_{CS}-\mu}{\beta_{CS}+\beta_{RC}}<w_c^+,
\]
then the subsystem has a unique interior equilibrium \((y_c^-,w_c^-)\), which is locally asymptotically stable. Since no periodic orbits exist in \(\Omega^\circ\), this equilibrium is globally asymptotically stable in \(\Omega^\circ\).

If instead
\[
0<\frac{\beta_{CS}-\mu}{\beta_{CS}+\beta_{RC}}<w_c^-,
\]
then no interior equilibrium exists. In this case,
\[
F(w_{\mathrm{thr}})=a(w_{\mathrm{thr}}-w_c^-)(w_{\mathrm{thr}}-w_c^+)>0.
\]
Since \(F(w_{\mathrm{thr}})=G(w_{\mathrm{thr}})\), we obtain \(G(w_{\mathrm{thr}})>0\), which implies \(w_{\mathrm{thr}}<w_f\). Hence the boundary equilibrium \((0,w_f)\) is locally asymptotically stable. Again, because no periodic orbits exist, \((0,w_f)\) is globally asymptotically stable.

If \(0<w_c^\pm<\frac{\beta_{CS}-\mu}{\beta_{CS}+\beta_{RC}},\)
 then both \(E_c^-\) and \(E_c^+\) exist, then the equilibria in \(\Omega\) are the stable node \(E_f\), the stable node \(E_c^-\), and the saddle \(E_c^+\). Therefore every trajectory in \(\Omega^\circ\) converges to either \(E_f\) or \(E_c^-\). This proves bistability.

This completes the proof.
\end{proof}

%%%%%%%%%%%%%%%%%%%%%%%%%%%%%%%%%%%%%%%%%%%%%%%%%%%%%%%%%%%%%%%%%%%%%%%%%%%%%%%%%%%%
%%%%%%%%%%%%%%%%%%%%%%%%%%%%%%%%%%%%%
\subsection*{Proof of Theorem \ref{thm:yzw_stability}}
\begin{proof}
\textbf{Step 1. Jacobian at \(E_f\) and local stability.}
Write the reduced system in the form
\[
(y',z',w')=(F_1,F_2,F_3),
\qquad x=1-y-z-w.
\]
At \(E_f=(0,0,w_f)\), we have \(x_f=1-w_f\). Direct differentiation gives
\[
J(E_f)=
\begin{pmatrix}
\beta_{CS}(1-w_f)-\mu-\beta_{RC}w_f & \beta_{AS}(1-w_f) & 0\\
0 & -(\mu+\delta+\beta_{RA}w_f) & 0\\
-\beta_{RS}w_f & -(1-\phi)\delta w_f-\beta_{RS}w_f & \beta_{RS}(1-2w_f)-\mu
\end{pmatrix}.
\]
Because the second row has zeros off the diagonal, the eigenvalues are
\[
\lambda_y^{(f)}=\beta_{CS}(1-w_f)-\mu-\beta_{RC}w_f,\quad
\lambda_z^{(f)}=-(\mu+\delta+\beta_{RA}w_f),\quad
\lambda_w^{(f)}=\beta_{RS}(1-2w_f)-\mu.
\]
Moreover,
\[
\lambda_y^{(f)}
=
(\mu+\beta_{RC}w_f)
\left(
\frac{\beta_{CS}(1-w_f)}{\mu+\beta_{RC}w_f}-1
\right)
=
(\mu+\beta_{RC}w_f)(\mathcal{R}_y-1).
\]
Thus \(\lambda_y^{(f)}<0\) if and only if \(\mathcal{R}_y<1\). Since \(\lambda_z^{(f)}<0\) and \(\lambda_w^{(f)}<0\), it follows that \(E_f\) is locally asymptotically stable if and only if \(\mathcal{R}_y<1\).

\medskip
\textbf{Step 2. Jacobian at \(E_c^-\) and local stability.}
If \(0<w_c^-<w_{thr}\), then by Theorem~\ref{thm:yw}, \(E_c^-=(y_c^-,0,w_c^-)\) exists and is locally asymptotically stable on the invariant manifold \(z=0\). From
\[
z'=z(\beta_{AC}y-\mu-\delta-\beta_{RA}w)-(1-\phi)\delta z^2,
\]
we obtain
\[
\frac{\partial z'}{\partial y}(y,0,w)=0,\qquad
\frac{\partial z'}{\partial w}(y,0,w)=0,
\]
and
\[
\frac{\partial z'}{\partial z}(y,0,w)=\beta_{AC}y-(\mu+\delta+\beta_{RA}w).
\]
Therefore \(J(E_c^-)\) is block triangular:
\[
J(E_c^-)=
\begin{pmatrix}
J_{yw}(y_c^-,w_c^-) & *\\
0\ \ 0 & \lambda_z^{(c)}
\end{pmatrix},
\qquad
\lambda_z^{(c)}=\beta_{AC}y_c^--(\mu+\delta+\beta_{RA}w_c^-).
\]
Hence \(E_c^-\) is locally asymptotically stable in the full system if and only if
\[
\lambda_z^{(c)}<0
\iff
\mathcal{R}_z^{(c)}<1.
\]
\end{proof}

%%%%%%%%%%%%%%%%%%%%%%%%%%%%%%%%%%%%%%%%%%%%%%%

%%%%%%%%%%%%%%%%%%%%%%%%%%%%%%%%%%%%%%%%%%%%%%%%%%%%%%%%%%%%%%%%%%%%%%%%%%%%%%%%%%%%
\subsection*{Proof of Theorem \ref{thm:yzw_persistence}}
\begin{proof}
Define
\[
P(y,z,w):=yz.
\]
Then \(P>0\) on the interior of the state space and \(P=0\) on the parts of the boundary where either \(y=0\) or \(z=0\). Along interior solutions,
\[
\frac{d}{dt}\ln P=\frac{y'}{y}+\frac{z'}{z}.
\]

\smallskip
\emph{(i) Growth rate near the invariant set \(M_1:=\{y=0\}\cap\Omega\).}
At the substance-free equilibrium \(E_f=(0,0,w_f)\),
\[
\left.\frac{y'}{y}\right|_{E_f}
=
\beta_{CS}(1-w_f)-\mu-\beta_{RC}w_f
=
(\mu+\beta_{RC}w_f)(\mathcal{R}_y-1).
\]
Under the present hypotheses, \(0<w_c^-<w_{\mathrm{thr}}<w_c^+\) implies \(w_f<w_{\mathrm{thr}}\), so \(\mathcal R_y>1\). Hence the average growth rate of \(y\) near \(M_1\) is positive.

\smallskip
\emph{(ii) Growth rate near the invariant set \(M_2:=\{z=0\}\cap\Omega\).}
On \(M_2\), the dynamics reduce to the \((y,w)\)-subsystem. If
\[
0<w_c^-<w_{thr}<w_c^+,
\]
then every trajectory in \(M_2\) with \(y(0)>0\) converges to \(E_c^-=(y_c^-,0,w_c^-)\). Along \(M_2\),
\[
\frac{z'}{z}=\beta_{AC}y-(\mu+\delta+\beta_{RA}w).
\]
Taking the limit along trajectories converging to \(E_c^-\) gives
\[
\lim_{t\to\infty}\frac{z'}{z}
=
\beta_{AC}y_c^--(\mu+\delta+\beta_{RA}w_c^-)
=
(\mu+\delta+\beta_{RA}w_c^-)(\mathcal{R}_z^{(c)}-1).
\]
If \(\mathcal{R}_z^{(c)}>1\), this limit is strictly positive. Hence the average growth rate of \(z\) near \(M_2\) is positive.

\smallskip
\emph{(iii) Persistence conclusion.}
Combining (i) and (ii), every invariant set on the boundary that can attract interior solutions has strictly positive average Lyapunov exponent for \(\ln P\). By the standard average-Lyapunov persistence theorem, the classes \(y\) and \(z\) are uniformly persistent. That is, there exists \(\eta>0\) such that every solution with \(y(0)>0\) and \(z(0)>0\) satisfies
\[
\liminf_{t\to\infty}y(t)\ge\eta,
\qquad
\liminf_{t\to\infty}z(t)\ge\eta.
\]
This completes the proof.
\end{proof}
%%%%%%%%%%%%%%%%%%%%%%%%%%%%%%%%%%%%%%%%%%%%%%%%%%%%%%%%%%%%%%%%%%%%%%%%%%%%%%%%%%%%

\subsection*{Proof of Theorem \ref{thm:phi1_interior_yzw}}
\begin{proof}
Assume \(\phi=1\) and consider the \(yzw\) system~\eqref{eq:scaledphi10} on
\[
\Omega=\{(y,z,w)\in\mathbb R_+^3:\ y+z+w<1\},
\qquad x=1-y-z-w.
\]
Let \(E^*=(y^*,z^*,w^*)\in\Omega\) be an interior equilibrium, so that
\[
y^*>0,\qquad z^*>0,\qquad w^*>0.
\]

\medskip
\noindent\textbf{Step 1. Solving \(z'=0\).}
Since \(z^*>0\), the equation \(z'=0\) implies
\[
\beta_{AC}y^*-\mu-\delta-\beta_{RA}w^*=0,
\]
and hence
\begin{equation}\label{eq:y_of_w_phi1_clean}
y^*=\frac{\mu+\delta+\beta_{RA}w^*}{\beta_{AC}}.
\end{equation}

\medskip
\noindent\textbf{Step 2. Solving \(w'=0\).}
Since \(w^*>0\), the equation \(w'=0\) gives
\[
\mu\rho=w^*\big(\mu-\beta_{RS}x^*\big),
\]
which yields
\begin{equation}\label{eq:x_of_w_phi1_clean}
x^*=\frac{\mu(w^*-\rho)}{\beta_{RS}w^*}.
\end{equation}
Thus \(x^*>0\) requires \(w^*>\rho\).

\medskip
\noindent\textbf{Step 3. Expression for \(z^*\).}
Using \(x^*+y^*+z^*+w^*=1\), we obtain
\begin{equation}\label{eq:z_of_w_phi1_clean}
z^*=1-w^*-y^*-x^*.
\end{equation}

\medskip
\noindent\textbf{Step 4. Admissible interval for \(w^*\).}
Substituting \eqref{eq:y_of_w_phi1_clean} and \eqref{eq:x_of_w_phi1_clean} into \eqref{eq:z_of_w_phi1_clean}, define
\[
\Gamma(w):=wz(w)
=
-\left(1+\frac{\beta_{RA}}{\beta_{AC}}\right)w^2
+\left(1-\frac{\mu+\delta}{\beta_{AC}}-\frac{\mu}{\beta_{RS}}\right)w
+\frac{\mu\rho}{\beta_{RS}}.
\]
Since the quadratic coefficient is negative and
\[
\Gamma(0)=\frac{\mu\rho}{\beta_{RS}}>0,
\]
the quadratic \(\Gamma(w)\) has exactly one positive root and one negative root. Denote the positive root by \(w_z^+\), given in \eqref{eq:wzplus}. Then
\[
z^*>0 \iff 0<w^*<w_z^+.
\]
Combining this with \(x^*>0\) and \(w^*<1\), the admissible interval for interior equilibria is
\[
w^*\in\mathcal W=(\rho,\min\{1,w_z^+\}).
\]

\medskip
\noindent\textbf{Step 5. Reduction to a scalar equation for \(w^*\).}
At equilibrium, the \(y\)-equation can be written as
\[
0
=
y^*\Big[\beta_{CS}x^*-\mu-\beta_{RC}w^*-(\beta_{AC}+\beta_{AS})z^*\Big]
+z^*\Big[\beta_{AS}(x^*+y^*)+\beta_{RA}w^*\Big].
\]
Solving this equation for \(z^*\) gives
\[
z^*
=
\frac{y^*(\mu+\beta_{RC}w^*-\beta_{CS}x^*)}
{\beta_{AS}x^*-(\mu+\delta)}.
\]
Substituting \eqref{eq:y_of_w_phi1_clean} and \eqref{eq:x_of_w_phi1_clean} yields
\[
z^*
=
\frac{
\dfrac{\mu+\delta+\beta_{RA}w^*}{\beta_{AC}}
\left(\mu+\beta_{RC}w^*-\beta_{CS}\dfrac{\mu(w^*-\rho)}{\beta_{RS}w^*}\right)
}{
\beta_{AS}\dfrac{\mu(w^*-\rho)}{\beta_{RS}w^*}-(\mu+\delta)
}.
\]
Equating this with \eqref{eq:z_of_w_phi1_clean} and simplifying gives
\begin{equation}\label{eq:w_scalar_phi1_clean}
\frac{
[(\beta_{RS}-\mu)w^*+\mu\rho-\beta_{RS}(w^*)^2]
[(\beta_{AS}\mu-\mu\beta_{RS}-\delta\beta_{RS})w^*-\beta_{AS}\mu\rho]
}
{
\beta_{RS}w^*
\big[\beta_{RC}\beta_{RS}(w^*)^2
+(\beta_{AS}-\beta_{CS})\mu(w^*-\rho)
-\delta\beta_{RS}w^*\big]
}
=
\frac{\mu+\delta+\beta_{RA}w^*}{\beta_{AC}}.
\end{equation}
Therefore, any interior equilibrium must satisfy \eqref{eq:phi1_y_of_w} with \(w^*\in\mathcal W\), and conversely any \(w^*\in\mathcal W\) solving \eqref{eq:w_scalar_phi1_clean} generates an interior equilibrium.

\medskip
\noindent\textbf{Step 6. Extinction of \(z(t)\) when \(\beta_{AC}<\mu+\delta\).}
Since \(y(t)<1\) and \(w(t)\ge 0\), we have
\[
z'
=
z\big[\beta_{AC}y-\mu-\delta-\beta_{RA}w\big]
\le
z(\beta_{AC}-\mu-\delta).
\]
If \(\beta_{AC}<\mu+\delta\), then there exists \(\eta>0\) such that
\[
z'(t)\le -\eta z(t).
\]
By comparison,
\[
z(t)\le z(0)e^{-\eta t},
\]
and therefore
\[
\lim_{t\to\infty}z(t)=0.
\]
Hence no interior equilibrium can exist in this case.

\medskip
\noindent\textbf{Step 7. Existence of an interior equilibrium when \(\mathcal{R}_z^{(c)}>1\).}
If
\[
0<w_c^-<w_{thr}<w_c^+
\qquad\text{and}\qquad
\mathcal{R}_z^{(c)}>1,
\]
then by Theorem~\ref{thm:yzw_persistence}, the problematic-use class \(z\) is persistent in the reduced \(yzw\) system. Thus all components of the reduced \(yzw\) system, i.e., $y,z,w$, are persistent. This implies that the reduced \(yzw\) system, or the original SCAR model is permanent. By the standard permanent theory for dissipative systems (through the fixed point theory), permanence implies the existence of at least one interior equilibrium in \(\Omega\). This completes the proof.
\end{proof}

\subsection*{Proof of Corollary~\ref{cor:zy_to_zero}}
\begin{proof}
From the \(z\)-equation in \eqref{eq:scaled_reduced},
\[
z' = z\Big[\beta_{AC}y-\mu-\delta-(1-\phi)\delta z-\beta_{RA}w\Big].
\]
Since \(0\le y\le 1\), \(z\ge 0\), and \(w\ge 0\), we obtain
\[
z'
\le z(\beta_{AC}-\mu-\delta).
\]
If \(\beta_{AC}<\mu+\delta\), define
\[
\eta:=\mu+\delta-\beta_{AC}>0.
\]
Then
\[
z'(t)\le -\eta z(t),
\]
and by comparison,
\[
0\le z(t)\le z(0)e^{-\eta t}.
\]
Hence
\[
\lim_{t\to\infty} z(t)=0.
\]

Now assume in addition that \(w_{\mathrm{thr}}<w_c^-\). By Theorem~\ref{thm:yw}(iv)(c), the substance-free equilibrium \(E_f=(0,w_f)\) is globally asymptotically stable for the SUD-free subsystem
\begin{subequations}\label{eq:sudfree_cor_gen}
\begin{align}
y' &= y\Big[\beta_{CS}(1-y-w)-\mu-\beta_{RC}w\Big],\\
w' &= \mu\rho-w\Big[\mu-\beta_{RS}(1-y-w)\Big].
\end{align}
\end{subequations}

Since \(z(t)\to0\), the \((y,w)\)-dynamics of the full reduced system become asymptotically equivalent to \eqref{eq:sudfree_cor_gen}. Indeed, writing the \((y,w)\)-equations from \eqref{eq:scaled_reduced} as
\[
y'
=
y\Big[\beta_{CS}(1-y-w)-\mu-\beta_{RC}w\Big]
+G_1(y,z,w),
\]
\[
w'
=
\mu\rho-w\Big[\mu-\beta_{RS}(1-y-w)\Big]
+G_2(y,z,w),
\]
we have
\[
G_1(y,z,w)
=
-y(\beta_{CS}+\beta_{AC}+\beta_{AS})z
-(1-\phi)\delta yz
+z\Big[\beta_{AS}(1-z-w)+\beta_{RA}w\Big],
\]
\[
G_2(y,z,w)
=
-(1-\phi)\delta wz-\beta_{RS}wz.
\]
Because solutions remain in the positively invariant region
\[
\Omega=\{(y,z,w)\in\mathbb{R}_+^3:\ y+z+w\le 1\},
\]
all components are bounded, and therefore
\[
G_1(y(t),z(t),w(t))\to0,
\qquad
G_2(y(t),z(t),w(t))\to0
\qquad\text{as } t\to\infty.
\]
Thus the \((y,w)\)-dynamics are asymptotically governed by the SUD-free subsystem \eqref{eq:sudfree_cor_gen}, whose global attractor is \(E_f=(0,w_f)\). It follows that
\[
\lim_{t\to\infty}(y(t),w(t))=(0,w_f),
\]
and therefore
\[
\lim_{t\to\infty} y(t)=0,
\qquad
\lim_{t\to\infty} w(t)=w_f.
\]

Combining this with \(\lim_{t\to\infty} z(t)=0\), we obtain
\[
\lim_{t\to\infty}(y(t),z(t),w(t))=(0,0,w_f).
\]
This completes the proof.
\end{proof}
\section*{Declarations}

\noindent\textbf{Funding and/or Conflicts of Interest/Competing Interests and Availability of Data and Materials}

This work was partially supported by the National Institutes of Health (NIH) [National Institute of Alcohol Abuse and Alcoholism (R01AA031281)-Jinni Su]; a Jetstream2 AI Fellowship and an NVIDIA Academic Grant Award [Yixuan He]; and Simons Foundation for Mathematician [Yun Kang].

The authors declare that they have no known competing financial interests or personal relationships that could have appeared to influence the work reported in this paper.

Experiment datasets were not generated or analyzed during the current study.
\bibliography{references}
\end{document}